\documentclass[11pt,onecolumn]{article}
\usepackage{graphicx} 
\usepackage[T1]{fontenc}
\usepackage{lmodern}
\usepackage[bottom=2cm]{geometry}
\usepackage{amsmath}
\usepackage{amssymb}
\usepackage{amsfonts}
\usepackage{enumitem}
\usepackage{mathtools}
\usepackage{systeme}
\usepackage[english]{babel}
\usepackage{amsthm}
\usepackage{thmtools}
\usepackage{amsrefs}
\usepackage[onehalfspacing]{setspace}
\usepackage{authblk}
\usepackage{microtype}

\newcommand{\Z}{{\mathbb{Z}}}
\newcommand{\QQ}{{\mathbb{Q}}}
\newcommand{\RR}{{\mathbb{R}}}
\newcommand{\C}{{\mathbb{C}}}
\newcommand{\F}{{\mathbb{F}}}

\newcommand{\CP}{\mathbb{CP}}
\newcommand{\tr}{\operatorname{tr}}

\newtheorem{theorem}{Theorem}[section]
\newtheorem{corollary}[theorem]{Corollary}
\newtheorem{lemma}[theorem]{Lemma}
\newtheorem{prop}[theorem]{Proposition}

\theoremstyle{definition}

\newtheorem{remark}[theorem]{Remark}

\usepackage[hypertexnames=false]{hyperref}

\usepackage{cleveref}
\crefname{prop}{Proposition}{Propositions}
\Crefname{prop}{Proposition}{Propositions}

\title{Topology of the Set of Entangled States}

\author[1]{Maximilian Illmer}
\author[1]{Tim Netzer}
\author[2,3]{Michael M. Wolf}
\affil[1]{Department of Mathematics, University of Innsbruck}
\affil[2]{Department of Mathematics, Technical University of Munich}
\affil[3]{Munich Center for Quantum Science and Technology}
\date{\today}

\begin{document}
\maketitle

\begin{abstract} 
    We investigate the topology of the set $\mathsf E$ of entangled bipartite density operators acting on $\C^{n_1}\otimes\C^{n_2}$. We start by showing that $\mathsf E$ is path-connected, and even simply connected except in the two-qubit case. In this exceptional case $\mathsf E$ turns out to be homotopy equivalent to the set of maximally entangled states, which itself is homeomorphic to $\mathbb{RP}^3$. Here we also compute the complete homology of the closure and interior of $\mathsf E$.
    In all larger dimensions, we show that the homology and homotopy groups of   $\mathsf E$ vanish in degrees $1\leq k\leq 2(n_1-1)(n_2-1)-2$, and all homology groups of degree $k\geq (n_1n_2)^2-3$ also vanish. This range is controlled by the space $\mathsf W$ of entanglement witnesses, which we show is highly connected beyond two qubits and homotopy equivalent to $\mathsf E$. By computing the Euler characteristic, using a torus-action fixed point argument together with Alexander duality, we show that $\mathsf E$  nevertheless has non-trivial reduced homology over every field for all $n_1, n_2 \geq 2$. 
\end{abstract}

\section{Introduction}

\emph{Entanglement} may be the deepest departure of quantum physics from its classical origins, and it is the fuel behind many applications in quantum science. Mathematically, the set of entangled states (density operators) has a simple definition yet a complicated structure. It is defined as the complement of the compact, convex set of separable (a.k.a.\ classically correlated \cite{Werner}) states, which are themselves just convex combinations of product states. This simple geometric picture stands in stark contrast to the elusiveness of a mathematical characterization beyond the lowest dimensional cases, for which the `PPT criterion' completely settles the question \cite{hhh}. While we know that, in principle, the set of entangled states is describable by finitely many polynomial inequalities in any fixed dimension, no such finite description is known. Even worse, we know that no finite number of criteria of the PPT type will suffice \cites{Skowronek,Hamza}, and deciding whether a state is entangled, as a weak-membership problem, is NP-hard \cites{Gurvits,Gharibian}.

While the geometry of the set of quantum states has been studied and used in quantum information theory intensively \cite{geometry},  comparatively little attention has been devoted to the topological properties of the set of entangled states. While geometry describes local structure, distances, and convexity, topology captures global features that remain invariant under continuous deformations, providing complementary information about the organization of the quantum state space.
In this paper, we pursue this topological perspective, i.e.\ we ask the question: \emph{What is the shape of the set of entangled states?} 

A topological characterization of entangled states is of both mathematical and physical  interest. From a mathematical perspective, it provides a global description of how entangled and separable states are arranged within the space of density operators and reveals structural properties that cannot be inferred from local geometric considerations alone.
From a physical perspective, topological properties are closely related to robustness under continuous perturbations and therefore naturally arise in the study of continuous quantum evolutions, adiabatic protocols, and quantum control. The topology of the boundary separating entangled and separable states also reflects the complexity of entanglement detection.

More broadly, investigating the topology of entangled state spaces establishes new connections between quantum information theory and fields such as algebraic topology, differential topology, and real algebraic geometry. These connections offer the possibility of importing powerful mathematical tools into the study of quantum correlations and providing a new perspective on the structure of entanglement. The results presented in this work constitute a first step in this direction.

The paper is structured as follows. \Cref{sec:pure}  computes the low-degree homotopy and homology groups of the space of pure entangled states. \Cref{sec:convex} collects some immediate consequences of the convex geometry of the space of density operators. In particular, we show that the set of entangled states is path-connected, deformation retracts to its outer boundary and is the Alexander dual of the set of singular separable states.
\Cref{sec:W} shows that the space of entangled states and the space of entanglement witnesses are homotopy equivalent. Beyond the case of two qubits, the latter space is highly connected, which implies that the low degree homotopy and reduced homology groups are trivial. Similarly, the top degree homology is shown to vanish. In \Cref{sec:Euler} we show that the Euler characteristic of the set of entangled states is zero in all dimensions. This implies  nontrival homology (over any field) and we bound the corresponding degree.
\Cref{sec:22} deals with the special case of two qubits. We show that the set of entangled states is homotopy equivalent to the set of maximally entangled states, which in turn is homeomorphic to $SO(3)$. We show that the homology of the set of entangled two qubit states remains unchanged when moving to the interior, but flips to the Alexander dual when taking the closure. 

\section{Preliminaries}

For a bipartite quantum system on $\C^{n_1} \otimes \C^{n_2}$ with total dimension $N\coloneqq n_1 n_2$ and $n_1,n_2\geq 2$ we denote the set of {\it states} (density matrices) by $$\mathsf D\coloneqq\big\{\sigma\in \text{Herm}(\C^{n_1} \otimes \C^{n_2})\cong\text{Herm}(\C^{N})\mid \sigma \geqslant 0, {\rm tr}[\sigma]=1\big\}.$$ Unless explicitly stated otherwise, interiors, closures, and boundaries are taken relative to the affine real space of trace-one Hermitian matrices. Thus $\mathsf D$ is a compact convex body in this affine space, which has dimension $N^2-1$. Its interior consists of the positive definite trace-one matrices, and its boundary $\partial\mathsf D$ is homeomorphic to $\mathbb S^{N^2-2}$ by radial projection from any interior point.
The extreme points of $\mathsf D$ are exactly the density matrices of rank $1$, also called {\it pure states}, which we denote by $\mathsf P\coloneqq {\rm ext}(\mathsf D)$. 

A bipartite state $\sigma$ is called {\it separable} if it can be written as a convex combination of tensor products of states of the corresponding subsystems, i.e.:

$$\sigma=\sum_i \lambda_i\left( \sigma_i^{(1)}\otimes\sigma_i^{(2)}\right),\ \lambda_i\geq 0,\  \sum_i\lambda_i=1,\  \sigma_i^{(j)}\geqslant 0,\ {\rm tr}[\sigma_i^{(j)}]=1.$$
We denote the set of all separable states by $\mathsf S$. It is a compact convex subset of $\mathsf D$ and since its affine hull equals the entire ambient space, it has non-empty interior. Its extreme points are the separable states of rank $1$, also called {\it pure product states}, which we denote by $\mathsf{PS}\coloneqq\mathsf P\cap \mathsf S$.

The set of {\it entangled states} is defined as the complement $\mathsf E\coloneqq\mathsf D\setminus \mathsf S$ and we write $\mathsf{PE}\coloneqq\mathsf P\cap\mathsf E$ for the set of  {\it pure entangled states}.
 
As we will see, some of the topological properties will be related to parts of the boundary of the set of entangled states. We therefore define $$\partial^* \mathsf E \coloneqq \mathsf E \cap \partial \mathsf D,\qquad \partial_* \mathsf E\coloneqq \overline{\mathsf E}\cap \mathsf S=\partial \mathsf E\setminus\partial^* \mathsf E.$$ We will call $\partial^*\mathsf E$ the {\it outer boundary} and $\partial_*\mathsf E$ the {\it inner boundary} of $\mathsf E$ in $\mathsf D$. 

After choosing a real affine coordinate system on the trace-one Hermitian matrices, we can regard all the above sets of density matrices as subsets of $\mathbb{R}^{N^2-1}$. Together with the interior $\mathsf E^\circ$ and the closure $\overline{\mathsf E}$ all these sets can be defined by first-order formulas in the language of ordered fields. Consequently, they are all semialgebraic and as such locally contractible (due to  \cite{real}*{Theorem 9.3.6}). Moreover, each compact semialgebraic set is homeomorphic to a finite simplicial complex \cite{real}*{Theorem 9.2.1} and therefore has finitely generated homology groups.

This allows us, in particular, to use the following version of \emph{Alexander duality}  \cite{hatcher}*{Theorem 3.44}, which we will apply to $\Sigma=\partial\mathsf D$:

\begin{lemma}[Alexander duality]\label{lem:alexander}
    Let $\Sigma$ be a space homeomorphic to $\mathbb S^{n}$, and let $A\subset\Sigma$ be nonempty, proper, compact, and locally contractible. Then for every $i$,
    $$
    \widetilde H_i(\Sigma\setminus A;\Z)\;\cong\;\widetilde H^{\,n-i-1}(A;\Z)
    $$
    in reduced singular homology and cohomology.
\end{lemma}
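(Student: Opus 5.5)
This is the version of Alexander duality in \cite{hatcher}*{Theorem 3.44}, and I will reduce to it rather than reprove it. Hatcher states the result for a compact, locally contractible, nonempty, proper subspace $K\subset \mathbb S^n$, with the conclusion $\widetilde H_i(\mathbb S^n\setminus K;\Z)\cong \widetilde H^{n-i-1}(K;\Z)$ for all $i$.

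\emph{Transport along the homeomorphism.} Fix a homeomorphism $h\colon\Sigma\to\mathbb S^n$ and set $K\coloneqq h(A)$. Each hypothesis on $A$ passes to $K$:
\begin{itemize}[nosep]
    \item $K$ is compact as the continuous image of a compact set;
    \item $K$ is nonempty and proper because $h$ is a bijection;
    \item $K$ is locally contractible because this property is defined purely topologically, so it is preserved under homeomorphism.
\end{itemize}
Moreover $h$ restricts to homeomorphisms $A\to K$ and $\Sigma\setminus A\to\mathbb S^n\setminus K$. By functoriality of reduced singular homology and cohomology, these restrictions induce isomorphisms
\[
\widetilde H_i(\Sigma\setminus A;\Z)\cong\widetilde H_i(\mathbb S^n\setminus K;\Z),
\qquad
\widetilde H^{n-i-1}(A;\Z)\cong\widetilde H^{n-i-1}(K;\Z).
\]
Composing these with Hatcher's isomorphism gives the claim.

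\emph{Where the hypotheses are used.} If one wanted to see why the hypotheses are needed, recall the structure of Hatcher's proof. One takes a neighborhood retraction argument: a locally contractible compact subset of Euclidean space is a retract of an open neighborhood. One then combines Poincaré duality for the open manifold $\mathbb S^n\setminus K$ with the long exact sequence of the pair $(\mathbb S^n,\mathbb S^n\setminus K)$. This identifies Čech-type cohomology $\varinjlim H^*(U)$ over neighborhoods $U\supset K$ with singular cohomology of $K$. That identification is exactly where local contractibility enters, and it is the only subtle point. Here it is simply assumed, since for our applications $A$ is semialgebraic, hence locally contractible by \cite{real}*{Theorem 9.3.6}.

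Thus there is no real obstacle: the proof is a citation plus the observation that every hypothesis and both sides of the isomorphism are homeomorphism invariants.
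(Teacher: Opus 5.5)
Your proposal is correct and takes the same approach as the paper, which likewise gives no argument beyond citing \cite{hatcher}*{Theorem 3.44}. Your transport of the hypotheses and of both sides along a homeomorphism $\Sigma\cong\mathbb S^n$ just makes explicit the routine step that the paper leaves implicit.
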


Another result that enables us to derive topological properties of  complements is the following lemma \cite{helmke}*{Lemma 5.3, Remark 5.4}. It formalizes the idea that deleting a set of codimension $q$ from a space can only create new topology starting in degree $q-1$. 

\begin{lemma}\label{lem:gp}
    Let $M$ be a smooth manifold and let $A\subset M$ be a closed subset that is a finite union of smooth submanifolds of $M$, each of real codimension at least $q$, where $q\ge2$. Then the inclusion $M\setminus A\hookrightarrow M$ induces isomorphisms
    $$
    \pi_i(M\setminus A)\cong\pi_i(M),\qquad H_i(M\setminus A;\Z)\cong H_i(M;\Z)\quad\text{for all}\quad 0\le i\le q-2.
    $$
\end{lemma}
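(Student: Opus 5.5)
The idea is a general-position (transversality) argument. Write $A=A_1\cup\dots\cup A_r$ with each $A_j$ a smooth submanifold of $M$ of dimension at most $m-q$, where $m=\dim M$. Since $A$ is closed, $M\setminus A$ is open in $M$, so it is again a smooth manifold and small perturbations of maps into it stay inside it. I will first show that the pair $(M,M\setminus A)$ is $(q-1)$-connected: every map $(D^k,S^{k-1})\to(M,M\setminus A)$ with $k\le q-1$ is homotopic rel $S^{k-1}$ to a map into $M\setminus A$. All other claims then follow from standard algebraic topology. I will work one path component of $M$ at a time, and the case $k=0,1$ is included, so $\pi_0$ is also handled.

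\emph{Key step (general position).} Let $f\colon (D^k,S^{k-1})\to(M,M\setminus A)$ with $k\le q-1$. Because $f(S^{k-1})$ is a compact subset of the open set $M\setminus A$, a collar neighbourhood $U$ of $S^{k-1}$ in $D^k$ is also mapped into $M\setminus A$.
\begin{enumerate}
\item By Whitney approximation, homotope $f$ rel a slightly smaller collar to a map that is smooth on the interior. The homotopy can be taken so small that it never leaves $M\setminus A$ near $S^{k-1}$.
\item By the parametric transversality theorem, perturb $f$ (again rel the collar) to be transverse to each $A_j$. The number of submanifolds is finite, and each $A_j$ is covered by countably many compact pieces; transversality to each piece is an open dense condition, so a residual set of perturbations works for all $j$ simultaneously.
\item Since $k+\dim A_j\le (q-1)+(m-q)<m$, transversality means $f(D^k)\cap A_j=\emptyset$ for every $j$. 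Hence the perturbed $f$ lands in $M\setminus A$.
\end{enumerate}
The main obstacle is step~2: the $A_j$ need not be closed, only their union is. To handle it I will apply the transversality theorem to the embedded manifolds $A_j$ themselves, which is legitimate for arbitrary (not necessarily closed) embedded submanifolds when one uses the residual-set formulation. I will also keep the perturbations supported away from $S^{k-1}$ and small enough, using compactness of $D^k$, that they stay in the open set $M\setminus A$ near the boundary.

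\emph{Consequences.} Taking $k\le q-1$ gives $\pi_k(M,M\setminus A)=0$ for all $k\le q-1$ and all basepoints in $M\setminus A$. The case $k=1$ together with $q\ge2$ shows that $\pi_0(M\setminus A)\to\pi_0(M)$ is a bijection: surjectivity because points can be moved off $A$, and injectivity because paths can be moved off $A$. The long exact homotopy sequence of the pair then gives $\pi_i(M\setminus A)\cong\pi_i(M)$ for $i\le q-2$. For homology, the pair is $(q-1)$-connected with path-connected pieces, so the relative Hurewicz theorem, applied componentwise, gives $H_i(M,M\setminus A;\Z)=0$ for $i\le q-1$. The long exact homology sequence then yields $H_i(M\setminus A;\Z)\cong H_i(M;\Z)$ for $0\le i\le q-2$.

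Alternatively, one could avoid relative Hurewicz by applying the same transversality argument directly to singular chains. Simplices of dimension $\le q-1$ can be pushed off $A$ compatibly with faces, which shows $C_\ast(M\setminus A)\to C_\ast(M)$ is an isomorphism on homology through degree $q-2$. This is the same perturbation argument done cell by cell.
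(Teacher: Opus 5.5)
Your proof is correct. Note that the paper gives no argument for this lemma: it simply imports it from Helmke (Lemma~5.3 and Remark~5.4). So your general-position proof replaces a citation rather than competing with a proof in the text.

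Your route is the standard one for statements of this kind:
\begin{enumerate}
\item Show the pair $(M,M\setminus A)$ is $(q-1)$-connected, by making maps of $k$-discs with $k\le q-1$ transverse to each piece $A_j$. Transversality forces disjointness because $k+\dim A_j\le m-1$.
\item Read off $\pi_i$ from the exact homotopy sequence of the pair.
\item Read off $H_i$ from the relative Hurewicz theorem, applied to pairs of path-connected spaces, which is legitimate thanks to your $\pi_0$-bijection and $q\ge 2$.
\end{enumerate}
As a by-product you also get surjectivity in degree $q-1$, which the paper's statement does not record.

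What your version buys is that it makes explicit two things the paper uses without comment:
\begin{itemize}
\item Closedness of $A$ is needed only to make $M\setminus A$ open, so that small perturbations near $S^{k-1}$ stay off $A$.
\item The individual pieces need not be closed. This is exactly the situation in Proposition~\ref{prop:W}, where $\mathsf W^c$ is split into finitely many non-closed semialgebraic strata.
\end{itemize}

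Some minor polish:
\begin{itemize}
\item Step~1 as phrased (``rel a smaller collar, smooth on the interior'') is loose, since $f$ is only continuous on the collar. It is cleaner either to apply parametric transversality only over a compact region where the map is smooth (the rest already lands in $M\setminus A$), or to allow a small homotopy of pairs to a globally smooth map and then use the compression criterion.
\item You do not need the Baire/residual formulation. The parametric transversality theorem (Sard applied to $F^{-1}(A_j)\to S$) holds verbatim for non-closed embedded $A_j$, and a finite union of null parameter sets is null.
\item The chain-level alternative in your last paragraph is unnecessary and harder to make rigorous; the Hurewicz route already suffices.
\end{itemize}
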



\section{Pure entangled states}\label{sec:pure}

We can identify the set of pure states $\mathsf P$ with the complex projective space $\CP^{N-1}$ by associating a line $[\psi]\in\CP^{N-1}$ with the projector $|\psi\rangle\langle \psi|/\|\psi\|^2$.
Under this identification, the pure product states $\mathsf{PS}$ become the \emph{Segre variety}, which is the image of the embedding of $\CP^{n_1-1}\times\CP^{n_2-1}$ into $\CP^{N-1}$  obtained by projectivizing the map $(\psi_1,\psi_2)\mapsto \psi_1\otimes \psi_2 $. Since $\CP^{n_1-1}\times\CP^{n_2-1}$ has complex dimension $(n_1-1)+(n_2-1)$, $\mathsf{PS}$ has real codimension $2\big((N-1)-(n_1+n_2-2)\big)=:c$. As the set $\mathsf{PE}$ of entangled pure states is the complement of $\mathsf{PS}$ in $\mathsf{P}$, we can use Lemma \ref{lem:gp} in order to identify the low-degree homotopy and homology groups of $\mathsf{PE}$ with those of $\mathsf P$:

\begin{prop}\label{prop:pe-groups}
    $\mathsf{PE}$ is path-connected and for $0\le i\le c-2$:\vspace*{-6pt}
    $$
    \pi_i(\mathsf{PE})\cong\begin{cases}0,& i=1,\\  \Z,& i=2,\\ 0,& 3\le i\le c-2.\end{cases}\qquad
    H_i(\mathsf{PE};\Z)\cong\begin{cases}\Z,& i\ \text{even},\\ 0,& i\ \text{odd}.\end{cases}
    $$ In particular, if $(n_1,n_2)\ne(2,2)$, then $\mathsf{PE}$ is simply connected.
\end{prop}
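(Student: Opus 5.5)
We will apply Lemma~\ref{lem:gp} with $M=\mathsf P\cong\CP^{N-1}$, $A=\mathsf{PS}$ and $q=c$. Under the identification of $\mathsf P$ with $\CP^{N-1}$, the Segre variety $\mathsf{PS}$ is the image of the Segre embedding. It is therefore a compact, hence closed, smooth submanifold of $\CP^{N-1}$ of real dimension $2(n_1+n_2-2)$, so it is a (one-element) finite union of smooth submanifolds of real codimension exactly $c=2\big((N-1)-(n_1+n_2-2)\big)$.

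To use the lemma we need $c\ge 2$. Since $N-1-(n_1+n_2-2)=(n_1-1)(n_2-1)\ge1$ for $n_1,n_2\ge2$, we get $c=2(n_1-1)(n_2-1)\ge 2$. The lemma then gives $\pi_i(\mathsf{PE})\cong\pi_i(\CP^{N-1})$ and $H_i(\mathsf{PE};\Z)\cong H_i(\CP^{N-1};\Z)$ for $0\le i\le c-2$.

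Next we insert the standard facts about $\CP^{N-1}$. Its homology is $\Z$ in even degrees $0\le i\le 2(N-1)$ and $0$ in odd degrees. Since $c-2<2(N-1)$, every $i$ in our range is at most $2(N-1)$, so the stated homology follows. For homotopy, the fibration $\mathbb S^1\to \mathbb S^{2N-1}\to\CP^{N-1}$ and its long exact sequence give $\pi_1=0$, $\pi_2\cong\Z$, and $\pi_i\cong\pi_i(\mathbb S^{2N-1})=0$ for $3\le i\le 2N-2$. Because $c-2\le 2N-2$, this covers all $3\le i\le c-2$.

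Path-connectedness comes from the case $i=0$, which always lies in the range because $c\ge2$: $\pi_0(\mathsf{PE})=\pi_0(\CP^{N-1})$ is a point. One caveat is that $\mathsf{PE}$ must be nonempty, which holds because $c>0$ (for example, maximally entangled vectors exist).

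For simple connectivity, suppose $(n_1,n_2)\ne(2,2)$. Then $(n_1-1)(n_2-1)\ge2$, so $c\ge4$ and $i=1$ lies in the range, giving $\pi_1(\mathsf{PE})=0$. In the $(2,2)$ case $c=2$, so only $i=0$ is covered, which explains the restriction in the statement.

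The only step needing care is verifying the hypotheses of Lemma~\ref{lem:gp}, namely smoothness and closedness of the Segre image and the codimension count. This is standard, since the Segre map is an injective holomorphic immersion of a compact manifold, hence an embedding.
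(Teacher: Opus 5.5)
Your proof is correct and follows the paper's argument: apply Lemma~\ref{lem:gp} with $M=\mathsf P\cong\CP^{N-1}$, $A=\mathsf{PS}$ and $q=c$, then read off the standard homotopy and homology groups of $\CP^{N-1}$. You also spell out checks the paper leaves implicit, namely that $c=2(n_1-1)(n_2-1)\ge 2$, that the Segre image is a closed smooth submanifold, and that $c-2\le 2N-2$ so the whole range lies where the groups of $\CP^{N-1}$ are as stated.
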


\begin{proof}
    We apply Lemma \ref{lem:gp} with $M=\mathsf P$ and $A=\mathsf {PS}$.
    The homotopy and homology of complex projective space are standard results (cf. \cite{hatcher}*{Examples 0.6 and 4.44}): $\pi_1(\CP^{N-1})=0$, $\pi_2(\CP^{N-1})\cong\Z$, $\pi_i(\CP^{N-1})=0$ for $3\le i\le 2N-2$, and $H_i(\CP^{N-1};\Z)\cong\Z$ for even $i$ with $0\le i\le 2(N-1)$ and $0$ otherwise. Path-connectedness is the case $i=0$.
\end{proof}

In the case of two qubits, $\mathsf{PE}$ is not simply connected. In fact, in this case it is homotopy equivalent to the set of maximally entangled states, which is not simply connected in any dimension: 

\begin{prop}\label{prop:maxent}
    Suppose $n_1=n_2=n$, and let $\mathsf M\subset\mathsf{PE}$ be the set of maximally entangled pure states. Then $\mathsf M$ is homeomorphic to $PU(n)=U(n)/U(1)$ and has $\pi_1(\mathsf M)\cong\Z/n \Z$. For $n=2$, $\mathsf M\simeq\mathsf{PE}$ are homotopy equivalent and $\mathsf{M}\cong SO(3)\cong\mathbb{RP}^3$ are homeomorphic, with $H_0(\mathsf M;\Z)=\Z$, $H_1(\mathsf M;\Z)=\Z/2\Z$, $H_3(\mathsf M;\Z)=\Z$, while all other integral homology groups are trivial.
\end{prop}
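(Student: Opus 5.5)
We identify pure states with unit vectors modulo phase and use the matrix isomorphism $\C^n\otimes\C^n\cong M_n(\C)$, $\sum_{ij}\psi_{ij}e_i\otimes e_j\mapsto(\psi_{ij})$. Under this map, the Schmidt coefficients of $\psi$ are the singular values of the matrix $\Psi$. A unit vector is maximally entangled exactly when all Schmidt coefficients equal $1/\sqrt n$, that is, when $\Psi\in U(n)/\sqrt n$.

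Passing to projective space, $\mathsf M$ is the image of $U(n)$ in $\CP^{N-1}$, namely $U(n)/U(1)=PU(n)$, where $U(1)$ acts by scalars. The map $U(n)/U(1)\to\mathsf M$ is a continuous bijection from a compact space to a Hausdorff space, hence a homeomorphism.

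For $\pi_1$, we use the fibration $U(1)\to U(n)\to PU(n)$. Alternatively, we can write $PU(n)=SU(n)/\mu_n$, where $\mu_n\cong\Z/n\Z$ acts freely by the central scalars. Since $SU(n)$ is simply connected, it is the universal cover of $PU(n)$, so $\pi_1(PU(n))\cong\Z/n\Z$.

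For $n=2$, the quotient $SU(2)/\{\pm1\}\cong SO(3)$ via the adjoint representation. Also $SU(2)\cong\mathbb S^3$ with antipodal $\pm1$, so the quotient is $\mathbb{RP}^3$. The homology of $\mathbb{RP}^3$ is standard (cellular chain complex $\Z\xrightarrow{0}\Z\xrightarrow{2}\Z\xrightarrow{0}\Z$), giving $\Z,\Z/2\Z,0,\Z$.

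The main step is the homotopy equivalence $\mathsf M\simeq\mathsf{PE}$ for $n=2$. Write unit $\Psi\in M_2(\C)$ with polar decomposition $\Psi=UP$, where $P=(\Psi^*\Psi)^{1/2}\geq0$. Product states correspond to rank-one $\Psi$, i.e.\ $\det\Psi=0$. On $\mathsf{PE}$ we have $\det\Psi\neq0$, so $\Psi$ is invertible and $U=\Psi(\Psi^*\Psi)^{-1/2}$ depends continuously on $\Psi$.

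Define the homotopy
\[
H_t(\Psi)=\frac{U\bigl((1-t)P+t\,\tfrac{\tr P}{2}\mathbf 1\bigr)}{\bigl\|U\bigl((1-t)P+t\,\tfrac{\tr P}{2}\mathbf 1\bigr)\bigr\|}.
\]
The middle factor stays positive definite, so $H_t(\Psi)$ remains invertible and entangled. The homotopy is equivariant under phases $\Psi\mapsto e^{i\theta}\Psi$, since $U\mapsto e^{i\theta}U$ and $P$ is unchanged, so it descends to $\mathsf{PE}$. It fixes $\mathsf M$ pointwise, because there $P$ is already scalar. At $t=1$ it maps into $\mathsf M$. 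Hence $\mathsf M$ is a strong deformation retract of $\mathsf{PE}$.

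This argument actually works for all $n$: it shows $\mathsf M$ is a deformation retract of the set of full-Schmidt-rank states. For $n=2$ that set equals $\mathsf{PE}$, because Schmidt rank at least $2$ is full rank.

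The only delicate points are the continuity of the polar factor on invertible matrices and the equivariance check. Both are routine.
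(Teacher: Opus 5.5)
Your proposal is correct and follows essentially the same route as the paper: you reshape vectors into matrices to identify $\mathsf M$ with $PU(n)$ through the compact-to-Hausdorff bijection, and you deform $\mathsf{PE}$ onto $\mathsf M$ through the polar decomposition $\Psi=UP$ by pushing the positive factor to a scalar. The differences are only cosmetic: you interpolate $P$ linearly to $\tfrac{\tr P}{2}\mathbf 1$ on normalized vectors modulo phase, where the paper uses $UP^{1-t}$ on $GL(2,\C)/\C^\times$, and you derive $\pi_1(PU(n))\cong\Z/n\Z$ from the universal cover $SU(n)\to PU(n)$ instead of citing it.
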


\begin{proof}
    If we fix an orthonormal product basis, we can define a single maximally entangled state via the equivalence class of $\Omega=\sum_{i}e_i\otimes f_i$ and obtain  $\mathsf M=\{[(U\otimes I)\Omega]:U\in U(n)\}$. The map $U\mapsto[(U\otimes I)\Omega]$ is onto $\mathsf M$, and $(U\otimes I)\Omega$ and $(U'\otimes I)\Omega$ are equivalent iff $U^{-1}U'$ is a scalar. It therefore descends to a continuous bijection $PU(n)\to\mathsf M$, which is a homeomorphism since $PU(n)$ is compact and $\mathsf M$ is Hausdorff. The topology of $PU(n)$ has been analyzed for instance in \cite{PUn}, where also the fundamental group $\pi_1(PU(n))\cong\Z/n\Z$ can be found. For $n=2$, it is well known that $PU(2)\cong SU(2)/\{\pm I\}\cong SO(3)\cong\mathbb{RP}^3$ with homology as stated \cite{hatcher}*{Example 2.42}. Finally, the homotopy equivalence $\mathsf M\simeq\mathsf{PE}$ for $n=2$ follows from the fact that $\mathsf M$ is a strong deformation retract of $\mathsf{PE}$ in the following way: Since for $n=2$ every pure state is either a product state or already has full Schmidt-rank every element of $\mathsf{PE}$ can be written as $[(X\otimes I)\Omega]$ for some invertible $X$, and since $X\mapsto(X\otimes I)\Omega$ is a vector space isomorphism, we have $\mathsf{PE}\cong GL(2,\C)/\C^\times$. Every $X\in GL(2,\C)$ has a unique polar decomposition $X=UP$. Here $U=X (X^* X)^{-1/2}$ and $P=(X^* X)^{1/2}$ depend continuously on $X$, which  allows us to define a homotopy $\tilde{H}_t(X):=U P^{1-t}$, $t\in[0,1]$. This in turn induces a well-defined homotopy $H_t([X]):=[\tilde{H}_t( X)]$ on the quotient space $GL(2,\C)/\C^\times$ since for every $\lambda\in\C^\times$: $[\tilde{H}_t(\lambda X)]=[\tilde{H}_t( X)]$. Continuity of the homotopy is guaranteed as it is a composition of continuous maps. Moreover, $H$ gives rise to a strong deformation retraction from $GL(2,\C)/\C^\times$ onto $PU(2)$ as $H_0={\rm id}$, $H_1([X])\in PU(2)$ and for every $[U]\in PU(2), t\in[0,1]: H_t([U])=[U]$. 
\end{proof}

\section{Convex-geometric reductions}\label{sec:convex}

\begin{theorem}\label{thm:path}
    The spaces $\mathsf E$, $\overline{\mathsf E}$, and $\mathsf E^\circ$ are all path-connected. So $H_0(\mathsf E; \Z)=H_0(\overline{\mathsf E}; \Z)=H_0(\mathsf E^\circ; \Z)=\Z$.
\end{theorem}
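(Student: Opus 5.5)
Path-connectedness of the three sets will follow from a retraction onto the pure entangled states $\mathsf{PE}$, which are path-connected by \Cref{prop:pe-groups}. The mechanism is radial pushing away from the separable set. Fix a point $\tau$ in the interior of $\mathsf S$; such a point exists because $\mathsf S$ has non-empty interior. For $\rho\in\mathsf E$ consider the ray $t\mapsto \tau + t(\rho-\tau)$ for $t\ge 1$. Suppose some point $\sigma=\tau+t(\rho-\tau)$ with $t>1$ were separable. Then $\rho$ would lie on the segment $[\tau,\sigma]$, and by convexity $\rho\in\mathsf S$, a contradiction. So the whole ray segment from $\rho$ out to the boundary point $r(\rho)\in\partial\mathsf D$ stays in $\mathsf E$.

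The same argument works for $\mathsf E^\circ$. If $\rho$ is an interior point of $\mathsf E$ relative to the affine space, then the points further out on the ray are also interior to $\mathsf E$, at least while they remain in $\mathsf D^\circ$. To see this, take a small ball $B$ around $\rho$ avoiding $\mathsf S$. The homothety centred at $\tau$ with factor $s\ge1$ maps $B$ to a ball around the pushed point, and that ball still avoids $\mathsf S$, since a separable point there would pull back into $B$ by the same convexity argument. The ball must also stay inside $\mathsf D$, but the pushed ball is bigger, not smaller, so it lies in $\mathsf D$ only while the pushed point is interior to $\mathsf D$. For $\mathsf E^\circ$ I will therefore avoid going all the way to $\partial\mathsf D$.

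Path-connectedness of $\mathsf E$ then reduces to connecting boundary points. Every $\rho\in\mathsf E$ is joined inside $\mathsf E$ to a point of $\partial^*\mathsf E$. Next I connect $\partial^*\mathsf E$ to $\mathsf{PE}$. Let $\rho\in\partial\mathsf D\cap\mathsf E$ be a singular entangled state. Its range contains some entangled pure vector: if every vector in the range were a product vector, $\rho$ would be supported in a subspace consisting only of product vectors, and then $\rho$ would be a convex combination of product projectors. More directly, take the spectral decomposition $\rho=\sum\lambda_i|\psi_i\rangle\langle\psi_i|$; not all $\psi_i$ are product, since otherwise $\rho$ is separable. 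Choose an entangled $\psi_1$. The segment from $\rho$ to $|\psi_1\rangle\langle\psi_1|$ stays in the face of $\mathsf D$ determined by the range of $\rho$. I need it to stay outside $\mathsf S$; this is the main obstacle. Instead of a segment, I will use the path $\rho_s\propto \lambda_1|\psi_1\rangle\langle\psi_1|+ (1-s)\sum_{i\ge2}\lambda_i|\psi_i\rangle\langle\psi_i|$. Entanglement along it is not automatic either.

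Cleaner plan: connect $\rho\in\mathsf E$ to an interior point of $\mathsf E$ close to $\mathsf{PE}$ using entanglement witnesses. Let $W$ be a witness with $\tr(W\rho)<0$ and $\tr(W\sigma)\ge0$ on $\mathsf S$. The set $\{\omega\in\mathsf D:\tr(W\omega)<0\}$ is convex and contained in $\mathsf E$. It contains $\rho$ and also the pure state given by an eigenvector of $W$ with negative eigenvalue, which is entangled. So the segment between them lies in $\mathsf E$, and even in $\mathsf E^\circ$ except possibly at the endpoints. This connects every point of $\mathsf E$ to $\mathsf{PE}$, and $\mathsf{PE}$ is path-connected, which gives the result for $\mathsf E$.

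For $\mathsf E^\circ$, the same witness slab is an open convex subset of $\mathsf D$, relatively open in the affine space intersected with $\mathsf D$. Points of the slab lying in $\mathsf D^\circ$ are in $\mathsf E^\circ$. I connect a point of $\mathsf E^\circ$ inside its slab to a point near a pure entangled state, perturbed slightly into $\mathsf D^\circ$. Path-connectedness of $\mathsf{PE}$ transfers to these interior points by a small uniform perturbation $\omega\mapsto(1-\epsilon)\omega+\epsilon I/N$. This keeps entanglement on the compact path in $\mathsf{PE}$ for $\epsilon$ small, because $\mathsf S$ is closed.

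Finally, $\overline{\mathsf E}$ is the closure of a path-connected set, and each point of it is the endpoint of a segment from a nearby point of $\mathsf E$ inside the closed witness slab. Hence $\overline{\mathsf E}$ is path-connected as well. The homology statements are immediate.
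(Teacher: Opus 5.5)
Your arguments for $\mathsf E$ and $\mathsf E^\circ$ (the ``cleaner plan'' and the paragraph after it) are correct and essentially the paper's. The set $\{\omega\in\mathsf D:\tr(W\omega)<0\}$ is convex, lies in $\mathsf E$, and contains an entangled pure state. You take a negative eigenvector of $W$; the paper takes a pure state from a convex decomposition of $\sigma$. For the interior, your small push towards $I/N$ is the same device the paper uses.

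One side remark of yours is false, though unused. The open segment from $\rho$ to a pure state need not lie in $\mathsf E^\circ$: if $\rho$ is itself pure, the whole segment consists of matrices of rank at most $2<N$.

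The gap is in the step for $\overline{\mathsf E}$. Being the closure of a path-connected set only gives connectedness. So everything rests on your claim that every $x\in\overline{\mathsf E}\setminus\mathsf E$ is the endpoint of a segment from $\mathsf E$ lying in a closed witness slab. This is neither justified nor true.

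Since $x\in\mathsf S$ and $\tr(W\,\cdot\,)\ge 0$ on $\mathsf S$, $x$ lies in the closed slab of $W$ only if $\tr(Wx)=0$. A witness for some nearby entangled state has no reason to satisfy this. Worse, there are points where no usable witness exists at all.

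For two qubits, let $x=\frac13(|01\rangle\langle01|+|10\rangle\langle10|+|11\rangle\langle11|)$. Every block-positive $W$ on $\C^2\otimes\C^2$ has the form $P+\Gamma(Q)$ with $P,Q\geqslant 0$; this is the dual form of the Peres--Horodecki criterion. Since $\Gamma(x)=x$, the condition $0=\tr(Wx)=\tr(Px)+\tr(Qx)$ forces $P$ and $Q$ to be supported on $\C|00\rangle$. Hence $W=c\,|00\rangle\langle00|\geqslant 0$, and its open slab is empty. Nevertheless $x\in\overline{\mathsf E}$: for $0<\varepsilon\le 1/3$ the state $x+\varepsilon(|01\rangle\langle10|+|10\rangle\langle01|)$ has a partial transpose that is not positive semidefinite.

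Convexity alone cannot repair the claim either. Let $K\subset L$ be two discs in $\RR^2$ that are internally tangent at $x$. Then $x$ lies in the closure of $L\setminus K$, but every segment from a point of $L\setminus K$ to $x$ enters $K$ before reaching $x$.

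What is missing is a genuinely local ingredient. The paper supplies it by observing that $\overline{\mathsf E}$ is semialgebraic, hence locally contractible and in particular locally path-connected. A connected, locally path-connected space is path-connected.
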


\begin{proof}
    First let $\sigma\in\mathsf E$. By separating $\sigma$ from the compact convex set $\mathsf S$, there is an affine functional (an \emph{entanglement witness}) that is strictly larger on $\sigma$ than on every point of $\mathsf S$. Write $\sigma$ as a convex combination of pure states. At least one pure state $\rho$ in this decomposition has functional value larger than the maximum over $\mathsf S$. Hence, $\rho$ is entangled and the segment from $\sigma$ to $\rho$ remains in the same open half-space and therefore lies in $\mathsf E$. In this way, we obtain a connecting path between any entangled state and $\mathsf{PE}$, and since $\mathsf{PE}$ is path-connected (\Cref{prop:pe-groups}), $\mathsf E$ is path-connected. 

    Being the closure of a path-connected set, $\overline{\mathsf E}$ is connected. Moreover, since $\overline{\mathsf E}$ is locally contractible, connectedness implies path-connectedness.

    Finally, consider two states $\sigma_0,\sigma_1\in\mathsf E^\circ$ and a path $\gamma:[0,1]\to\mathsf E$ between them.  Since $\gamma([0,1])$ is compact and disjoint from the closed set $\mathsf S$, there is an $\varepsilon>0$ such that any state within distance $\varepsilon$  from  $\gamma([0,1])$ is still entangled. Consequently, for sufficiently small $\varepsilon>0$, the path $\tilde{\gamma}(t):=\gamma(t)(1-\varepsilon t(1-t))+\varepsilon t(1-t)I/N$, which connects $\sigma_0$ and $\sigma_1$, lies in $\mathsf E^\circ$ since for any $t\in(0,1)$ the state $\tilde{\gamma}(t)$ is positive definite and therefore in the interior of $\mathsf E$.  
\end{proof}

\begin{lemma}\label{lem:projection}
    For every $\rho\in\mathsf D$ there is a unique point $\Theta(\rho)\in\mathsf S$ minimizing $\|\rho-\theta\|_2$ over $\theta\in\mathsf S$. The map $\Theta\colon\mathsf D\to\mathsf S$ is $1$-Lipschitz continuous, and for all $\theta\in\mathsf S$ one has $\tr[(\rho-\Theta(\rho))(\theta-\Theta(\rho))]\le0$.
\end{lemma}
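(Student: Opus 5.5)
This is the standard metric projection onto a nonempty compact convex subset of a Hilbert space. Here the Hilbert space is the real inner product space of Hermitian matrices with the Hilbert--Schmidt inner product $\langle A,B\rangle=\tr[AB]$. The plan is to prove existence, then the variational inequality, and to deduce uniqueness and the Lipschitz bound from that inequality.

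\emph{Existence.} The function $\theta\mapsto\|\rho-\theta\|_2$ is continuous, and $\mathsf S$ is nonempty and compact, so a minimizer exists.

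\emph{Variational inequality.} Let $\theta_0$ be any minimizer and $\theta\in\mathsf S$. By convexity, $\theta_t=\theta_0+t(\theta-\theta_0)\in\mathsf S$ for $t\in[0,1]$. Expanding gives
$$\|\rho-\theta_t\|_2^2=\|\rho-\theta_0\|_2^2-2t\,\tr[(\rho-\theta_0)(\theta-\theta_0)]+t^2\|\theta-\theta_0\|_2^2 .$$
Minimality of $\theta_0$ makes this at least $\|\rho-\theta_0\|_2^2$. Dividing by $t>0$ and letting $t\to0$ yields $\tr[(\rho-\theta_0)(\theta-\theta_0)]\le0$.

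\emph{Uniqueness.} Let $\theta_0,\theta_1$ be two minimizers. Apply the inequality for $\theta_0$ with $\theta=\theta_1$, and for $\theta_1$ with $\theta=\theta_0$, then add the two. The terms involving $\rho$ cancel, leaving $\|\theta_0-\theta_1\|_2^2\le0$. Hence the minimizer is unique, so $\Theta$ is well defined and the stated inequality holds for $\Theta(\rho)$.

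\emph{1-Lipschitz bound.} Take $\rho,\rho'\in\mathsf D$ and write $p=\Theta(\rho)$, $p'=\Theta(\rho')$. Apply the inequality for $\rho$ with $\theta=p'$, and for $\rho'$ with $\theta=p$. Adding gives
$$\tr[((\rho-\rho')-(p-p'))(p'-p)]\le0,$$
that is, $\|p-p'\|_2^2\le\tr[(\rho-\rho')(p-p')]$. Cauchy--Schwarz bounds the right side by $\|\rho-\rho'\|_2\|p-p'\|_2$. Dividing by $\|p-p'\|_2$ when it is nonzero gives $\|p-p'\|_2\le\|\rho-\rho'\|_2$, so $\Theta$ is $1$-Lipschitz and in particular continuous.

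No step is a real obstacle. The only point needing care is that the trace pairing is a genuine real inner product on Hermitian matrices, inducing $\|\cdot\|_2$. This holds because $\tr[AB]$ is real for Hermitian $A,B$, symmetric in $A$ and $B$, and satisfies $\tr[A^2]=\|A\|_2^2>0$ for $A\ne0$.
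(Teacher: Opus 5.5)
Your proposal is correct and follows essentially the paper's argument: existence by compactness, the variational inequality from the one-sided derivative along the segment toward $\theta$, and the $1$-Lipschitz bound by adding the two variational inequalities and applying Cauchy--Schwarz. The only difference is that you derive uniqueness by adding the variational inequalities for two minimizers, whereas the paper invokes strict convexity of the squared Hilbert--Schmidt norm; both routes are equally valid.
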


\begin{proof}
    Existence follows from compactness of $\mathsf S$, and uniqueness from strict convexity of the squared Hilbert-Schmidt norm on the convex set $\mathsf S$. For the inequality, consider $f(t)=\|\rho-((1-t)\Theta(\rho)+t\theta)\|_2^2$ on $[0,1]$ and observe that $0\leq f'(0)=2\tr[(\rho-\Theta(\rho))(\Theta(\rho)-\theta)]$. For Lipschitz continuity, consider $\rho,\mu\in\mathsf D$ and note that
    \begin{align*}
        \|\Theta(\rho)-\Theta(\mu)\|^2
        &\leq \|\Theta(\rho)-\Theta(\mu)\|^2
        +\underbrace{\langle\rho-\Theta(\rho),\Theta(\rho)-\Theta(\mu)\rangle}_{\geq0} \\
        &\quad+\underbrace{\langle\Theta(\mu)-\mu,\Theta(\rho)-\Theta(\mu)\rangle}_{\geq 0} \\
        &= \langle\rho-\mu,\Theta(\rho)-\Theta(\mu)\rangle
        \leq \|\rho-\mu\|\cdot\|\Theta(\rho)-\Theta(\mu)\|,
    \end{align*}
    where the first inequality uses the variational inequality for the two additional terms and the last inequality is Cauchy--Schwarz.
\end{proof}

\begin{prop}
\label{prop:Bor}
    $\partial^* \mathsf E$ is a strong deformation retract of $\mathsf E$, whereas $\partial_* \mathsf E$ is a strong deformation retract of $\overline{\mathsf E}$.
\end{prop}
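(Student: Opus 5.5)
We will use the projection $\Theta$ of \Cref{lem:projection} to push every entangled state radially away from $\mathsf S$ until it reaches $\partial\mathsf D$. For $\rho\in\mathsf E$ we have $\rho\neq\Theta(\rho)$. Consider the ray $r_\rho(s)=\Theta(\rho)+s(\rho-\Theta(\rho))$, $s\ge1$. Since $\mathsf D$ is compact and convex, there is a largest $s^*(\rho)\ge1$ with $r_\rho(s^*)\in\mathsf D$, and $r_\rho(s^*(\rho))\in\partial\mathsf D$. Every point of the ray with $s>0$ projects onto $\Theta(\rho)$. This follows from the variational inequality, because the direction $\rho-\Theta(\rho)$ is unchanged. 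Hence the whole segment $\{r_\rho(s):1\le s\le s^*\}$ consists of points at positive distance from $\mathsf S$, so it lies in $\mathsf E$. We then define
$$F_t(\rho)=r_\rho\big((1-t)+t\,s^*(\rho)\big),\qquad t\in[0,1].$$
$F_0$ is the identity and $F_1(\rho)\in\partial^*\mathsf E$. If $\rho\in\partial\mathsf D$, then $s^*(\rho)=1$, because $\Theta(\rho)$ lies in $\mathsf D$ and convexity forces every point of the ray beyond $\rho$ out of $\mathsf D$. This uses that $\mathsf S$ has nonempty interior in $\mathsf D$. Since the segment from $\Theta(\rho)$ to $\rho$ lies in $\mathsf D$, a continuation beyond a boundary point could stay in $\mathsf D$ only if the whole segment lay in a supporting face. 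That case has to be excluded, for instance using that $\mathsf S$ contains the interior point $I/N$, together with the obtuse-angle property. If this is delicate, one can alternatively replace the base point $\Theta(\rho)$ by a fixed interior point of $\mathsf S$ when proving that the retraction fixes $\partial^*\mathsf E$.

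The main obstacle is continuity of $s^*$. $\Theta$ is continuous by \Cref{lem:projection}, and the direction $\rho-\Theta(\rho)$ is continuous and nonzero on $\mathsf E$. The exit time of a ray from a compact convex body is continuous whenever the base point lies in the interior of the body. Here the base point $\Theta(\rho)$ may lie on $\partial\mathsf D$ (singular separable states), so I will argue via the gauge function instead. Write $\rho=\Theta(\rho)+v$. Then $s^*=\sup\{s:\Theta(\rho)+sv\ge0\}$ can be written as $1/\lambda$, where $\lambda$ is the largest $\lambda$ with $\lambda\Theta(\rho)+(1-\lambda)\cdots$. More concretely, $s^*(\rho)^{-1}$ equals the maximal eigenvalue of $-A^{-1/2}vA^{-1/2}$, and this is singular when $A=\Theta(\rho)$ is singular. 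To avoid this, I will compute $s^*$ as the minimum over unit vectors $x$ with $\langle x,vx\rangle<0$ of $\langle x,\Theta(\rho)x\rangle/(-\langle x,vx\rangle)$. I will then check upper and lower semicontinuity using compactness and the fact that $\rho\ge0$ gives $s^*\ge1$. Lower semicontinuity is the hard direction, and I expect to obtain it from the fact that $\rho\in\mathsf E$ keeps the segment away from degenerate situations.

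For $\overline{\mathsf E}$ I use the straight-line homotopy toward the projection: $G_t(\rho)=(1-t)\rho+t\Theta(\rho)$. $G_0$ is the identity and $G_1(\rho)=\Theta(\rho)$. For $\rho\in\overline{\mathsf E}$ the point $\Theta(\rho)$ lies in $\mathsf S$. Moreover, $\Theta(\rho)$ lies in $\overline{\mathsf E}$: either $\rho\in\mathsf S$, in which case $\Theta(\rho)=\rho\in\overline{\mathsf E}\cap\mathsf S$, or $\rho\notin\mathsf S$, in which case $\Theta(\rho)$ is a limit of points on the open segment, and these are entangled by the same projection argument. Hence $\Theta(\rho)\in\partial_*\mathsf E$. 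The intermediate points $G_t(\rho)$ with $t<1$ are entangled or equal to $\rho$, so they lie in $\overline{\mathsf E}$. Points of $\partial_*\mathsf E\subset\mathsf S$ are fixed for all $t$. Continuity follows directly from the $1$-Lipschitz property of $\Theta$, so this half is routine.
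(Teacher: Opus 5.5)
Your treatment of $\overline{\mathsf E}$ (the straight-line homotopy $G_t$ to $\Theta(\rho)$) is correct and is the paper's argument. The first half has a genuine gap. With base point $\Theta(\rho)$, the claim $s^*(\rho)=1$ for $\rho\in\partial^*\mathsf E$ is false, so $F_t$ moves points of $\partial^*\mathsf E$ and is not a deformation retraction.

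Convexity forces the ray out of $\mathsf D$ right after $\rho$ only when the base point lies in $\mathsf D^\circ$. For $\rho\in\partial\mathsf D$ one checks that $(1+\varepsilon)\rho-\varepsilon\Theta(\rho)\geqslant 0$ holds for some $\varepsilon>0$ exactly when $\operatorname{range}\Theta(\rho)\subseteq\operatorname{range}\rho$. In that case the whole segment lies in a proper face of $\mathsf D$. Neither the variational inequality nor $I/N\in\mathsf S$ excludes this.

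Here is an explicit example for $(n_1,n_2)\neq(2,2)$. Pick orthonormal $e_1,e_2\in\C^{n_1}$ and $f_1,f_2\in\C^{n_2}$. Let $P$ be the projector onto $\operatorname{span}(e_1,e_2)\otimes\operatorname{span}(f_1,f_2)$, let $\phi=(e_1\otimes f_1+e_2\otimes f_2)/\sqrt2$, and set $\rho_q=q\,\phi\phi^*+(1-q)P/4$.
\begin{itemize}
\item For $1/3<q<1$, $\rho_q$ has rank $4<N$ and a negative partial-transpose eigenvalue $(1-3q)/4$, so $\rho_q\in\partial^*\mathsf E$.
\item Since $|\langle\phi,a\otimes b\rangle|^2\le\frac12\|P(a\otimes b)\|^2$, every pure product state $\sigma$ satisfies $\tr[(\phi\phi^*-P/4)\sigma]\le\frac14=\tr[(\phi\phi^*-P/4)\rho_{1/3}]$.
\item Because $\rho_q-\rho_{1/3}=(q-\frac13)(\phi\phi^*-P/4)$, the variational inequality (which characterizes the metric projection) gives $\Theta(\rho_q)=\rho_{1/3}$.
\item Your ray is then the isotropic line $q'\mapsto\rho_{q'}$, which stays in $\mathsf D$ up to $q'=1$. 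Hence $s^*(\rho_q)=\frac{2/3}{q-1/3}>1$, $F_t(\rho_q)\neq\rho_q$ for every $t>0$, and $F_1(\rho_q)=\phi\phi^*$.
\end{itemize}
The same example shows that $\rho\in\mathsf E$ does not keep the base point away from $\partial\mathsf D$. So the lower semicontinuity of $s^*$ you hoped to derive from that has no foundation either.

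The fallback you mention is the right repair, but it cannot be used only ``when proving that the retraction fixes $\partial^*\mathsf E$''; it has to replace the whole construction. Use the fixed base point $I/N\in\mathsf S\cap\mathsf D^\circ$, and let $\Omega(\sigma)$ be the point where the ray from $I/N$ through $\sigma$ exits $\mathsf D$.
\begin{itemize}
\item Points beyond $\sigma$ on this ray are entangled by convexity alone; otherwise $\sigma$ would be a convex combination of $I/N$ and a separable state.
\item $\Omega$ is continuous because the base point is interior, which is the fact you quoted yourself.
\item $\Omega(\sigma)=\sigma$ on $\partial\mathsf D$, because a ray from an interior point meets $\partial\mathsf D$ only once.
\end{itemize}
Then $(1-t)\sigma+t\Omega(\sigma)$ is the required strong deformation retraction. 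This is exactly the paper's proof, and this half needs neither $\Theta$ nor the obtuse-angle property.
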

\begin{proof} 
    For $\sigma\in\mathsf E$ we follow the line from $I/N\in\mathsf S^\circ$ towards $\sigma$ and possibly beyond. Since $\mathsf S$ is convex, it will stay in $\mathsf E$ beyond $\sigma$, as long as it stays in $\mathsf D$. It will thus eventually intersect $\partial^*\mathsf E$ in a unique point $\Omega(\sigma),$ which gives rise to the  continuous map $$\Omega\colon \mathsf E\to\partial^*\mathsf E$$ which clearly satisfies $\Omega{\mid_{ \partial^*\mathsf E}}={\rm id}_{\partial^*\mathsf E}$. Uniqueness and continuity here both rely on the fact that the line intersects the interior. Then  $H_1(\sigma,t)\coloneqq (1-t)\sigma +t \Omega(\sigma)$ defines a strong deformation retraction from $\mathsf E$ to $\partial^*\mathsf E$.

    Let $\Theta$ be the metric projection from Lemma~\ref{lem:projection}. If $\rho\in\overline{\mathsf E}\setminus\mathsf S=\mathsf E$, then $\Theta(\rho)$ cannot lie in the interior of $\mathsf S$, since otherwise moving from $\Theta(\rho)$ slightly toward $\rho$ would remain in $\mathsf S$ and be closer to $\rho$. Moreover, the open segment from $\rho$ to $\Theta(\rho)$ is disjoint from $\mathsf S$: otherwise a point of $\mathsf S$ on that segment would be closer to $\rho$ than $\Theta(\rho)$. Hence $\Theta(\rho)$ is a limit point of $\mathsf E$ and lies in $\partial_*\mathsf E$. If $\rho\in\partial_*\mathsf E$, then $\Theta(\rho)=\rho$. Thus $H_2(\rho,t)=(1-t)\rho+t\Theta(\rho)$ is the desired deformation retraction.
\end{proof}

This can now be used to relate the homology of the set of entangled states to the cohomology of the set of singular separable states:

\begin{corollary}\label{cor:alexander-reduction}
    For every $i$, $\widetilde H_i(\mathsf E;\Z)\cong \widetilde H^{N^2-3-i}(\mathsf S\cap \partial\mathsf D;\Z)$.
\end{corollary}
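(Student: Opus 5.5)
The plan is to combine \Cref{prop:Bor} with Alexander duality (\Cref{lem:alexander}) applied to $\Sigma=\partial\mathsf D$. By \Cref{prop:Bor}, $\partial^*\mathsf E=\mathsf E\cap\partial\mathsf D$ is a strong deformation retract of $\mathsf E$, so $\widetilde H_i(\mathsf E;\Z)\cong\widetilde H_i(\partial^*\mathsf E;\Z)$ for all $i$. Since $\mathsf E=\mathsf D\setminus\mathsf S$, we have $\partial^*\mathsf E=\partial\mathsf D\setminus(\mathsf S\cap\partial\mathsf D)$, which is exactly the complement of $A\coloneqq\mathsf S\cap\partial\mathsf D$ inside $\Sigma=\partial\mathsf D$.

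Next I will check the hypotheses of \Cref{lem:alexander}. As noted in the preliminaries, $\partial\mathsf D$ is homeomorphic to $\mathbb S^{N^2-2}$ via radial projection from $I/N$, so $n=N^2-2$. The set $A$ is compact, being the intersection of two compact sets. It is nonempty because it contains the pure product states $\mathsf{PS}$, which are rank one and hence lie on $\partial\mathsf D$. It is proper because it misses $\mathsf{PE}$, which is a nonempty subset of $\partial\mathsf D$ by \Cref{prop:pe-groups}. Finally, $A$ is semialgebraic, since it is defined by a first-order formula in the same way as the other sets listed in the preliminaries, and is therefore locally contractible by \cite{real}*{Theorem 9.3.6}.

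Alexander duality then gives
\[
\widetilde H_i(\partial\mathsf D\setminus A;\Z)\cong\widetilde H^{\,N^2-2-i-1}(A;\Z)=\widetilde H^{\,N^2-3-i}(\mathsf S\cap\partial\mathsf D;\Z),
\]
and composing this with the isomorphism from the deformation retraction yields the claim.

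There is no real obstacle. The only point needing care is the local contractibility of $A$, and the semialgebraicity remark already made in the preliminaries settles it.
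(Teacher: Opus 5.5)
Your proposal is correct and matches the paper's proof: both use \Cref{prop:Bor} to replace $\mathsf E$ by $\partial^*\mathsf E=\partial\mathsf D\setminus(\mathsf S\cap\partial\mathsf D)$, and then apply Alexander duality (\Cref{lem:alexander}) on $\partial\mathsf D\cong\mathbb S^{N^2-2}$ with $A=\mathsf S\cap\partial\mathsf D$. You also check explicitly that $A$ is nonempty (it contains $\mathsf{PS}$) and proper (it misses $\mathsf{PE}$), which the paper only asserts.
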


\begin{proof}
    From \Cref{prop:Bor} we obtain the homotopy equivalence $\mathsf E\simeq\partial^*\mathsf E  $. Since $\mathbb S^{N^2-2}\cong\partial\mathsf D=\partial^*\mathsf E\ \dot\cup\ (\mathsf S\cap \partial\mathsf D)$ and $\mathsf S\cap \partial\mathsf D$ is a nonempty, compact, and locally contractible proper subset of $\partial\mathsf D$, we can apply Alexander duality from \Cref{lem:alexander} with $A=\mathsf S\cap \partial\mathsf D$ and obtain the claimed result.
\end{proof}


\section{Entanglement witnesses}\label{sec:W}

The aim of this section is to show that the space $\mathsf E$ is homotopy equivalent to the space of entanglement witnesses and then to infer topological information about the former from the latter.

We call a Hermitian operator $W$ on $\C^{n_1}\otimes\C^{n_2}\cong\C^N$ an \emph{entanglement witness} if $$\lambda_{\max{}} (W)=\max_{\rho\in\mathsf D} \tr[\rho W]>\max_{\sigma\in\mathsf S} \tr[\sigma W]\eqqcolon h_{\mathsf{S}}(W),$$ i.e., if the largest expectation value that is achieved by any separable state is exceeded by that of some entangled state. In other words, if the top eigenspace of $W$ contains no product vector. This property is unchanged by adding a multiple of the identity and by multiplying by a positive scalar. In order to remove these trivial degrees of freedom, we restrict to  traceless operators of unit norm. That is, if  ${\cal H}_0$ denotes the set of traceless Hermitian operators on $\C^N$ and $\mathbb{S}({\cal H}_0):=\{X\in{\cal H}_0: \|X\|_2=1\}$ its unit sphere, we define the set of entanglement witnesses $$\mathsf W\coloneqq\big\{W\in\mathbb{S}({\cal H}_0):\lambda_{\max{}}(W)>h_{\mathsf S}(W)\big\},\quad \text{and}\quad \mathsf{W}^c:=\mathbb{S}({\cal H}_0)\setminus \mathsf W.$$
So $B\in\mathsf{W}^c$ iff the top eigenspace of $B$ contains a product vector. As before, we will infer topological properties of $\mathsf W$ from its complement $\mathsf W^c$, which we will analyze first: 

\begin{lemma}\label{lem:badcodim}
    The set $\mathsf W^c$ is compact, semialgebraic, path-connected and has real codimension at least $2(n_1-1)(n_2-1)$ in $\mathbb S(\mathcal H_0)$.
\end{lemma}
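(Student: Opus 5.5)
Compactness and semialgebraicity come first. The function $W\mapsto \lambda_{\max}(W)-h_{\mathsf S}(W)$ is continuous, being a difference of support functions of compact convex sets. It is also nonnegative on $\mathbb S(\mathcal H_0)$. So $\mathsf W^c$ is its zero set, which is closed in the compact sphere. It is semialgebraic because $\lambda_{\max}$ and $h_{\mathsf S}$ are first-order definable, exactly as for the sets of states in the Preliminaries.

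For path-connectedness I use the description ``$B\in\mathsf W^c$ iff the top eigenspace of $B$ contains a product vector''.

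1. Take $B\in\mathsf W^c$ with a product vector $\psi_1\otimes\psi_2$ in its top eigenspace. Let $P=|\psi_1\otimes\psi_2\rangle\langle\psi_1\otimes\psi_2|$.
2. Along $B_t=(1-t)B+t(P-I/N)$, normalized, the vector $\psi_1\otimes\psi_2$ remains a top eigenvector. Indeed it is a top eigenvector of both summands, and this is preserved for $t<1$ as well. The normalization is harmless: $B_t\neq 0$, since otherwise $B$ would be a negative multiple of $P-I/N$, whose top eigenspace is then spanned by $P^\perp$ and is nonzero in the traceless unit sphere. I will check that small case separately; if it occurs, one can perturb first.
3. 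So every point of $\mathsf W^c$ is joined inside $\mathsf W^c$ to a normalized $P-I/N$ with $P$ a pure product projector.
4. These endpoints form a continuous image of $\CP^{n_1-1}\times\CP^{n_2-1}$, which is connected. Hence $\mathsf W^c$ is path-connected.

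For the codimension, I stratify by $k=\dim$ of the top eigenspace, with $1\le k\le N-1$ (traceless and nonzero forces $k<N$). The relevant set is the incidence variety
$$\mathcal I_k=\{(V,B): V\in \mathrm{Gr}_k(\C^N),\ V \text{ is the top eigenspace of } B,\ V\cap \mathrm{Segre}\neq0\}.$$
Count dimensions of Hermitian $B$ modulo scaling with a given top eigenspace $V$. The data are $B|_{V^\perp}$ with spectrum below the top eigenvalue, giving $(N-k)^2$ real parameters, plus the eigenvalue itself, plus $V$. The set of such $B$ with top multiplicity $k$ therefore has real codimension $k^2-1$ in $\mathrm{Herm}(\C^N)$.

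Within that stratum, the condition that $V$ meets the Segre cone is a subvariety of $\mathrm{Gr}_k$. Its complex codimension is $\max(0,(N-n_1-n_2+1)-(k-1))$, by the standard intersection dimension count for a $k$-plane and a projective variety of codimension $(n_1-1)(n_2-1)$ in $\CP^{N-1}$. Here $N-n_1-n_2+1=(n_1-1)(n_2-1)=:m$.

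The total real codimension of the $k$-th piece is then at least
$$k^2-1+2\max(0,m-k+1).$$
For $k=1$ this is $2m$. For larger $k$ the expression is $k^2-1+2m-2k+2=(k-1)^2+2m\ge 2m$ while $k\le m+1$, and for $k>m+1$ it is $k^2-1\ge (m+1)^2\ge 2m$. Each piece is a semialgebraic set, hence a finite union of smooth submanifolds of the stated codimension. The codimension is preserved on passing to the traceless unit sphere, since that only quotients out the identity direction and scaling, which act freely on every stratum. This gives codimension at least $2(n_1-1)(n_2-1)$.

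The main obstacle is making the Grassmannian incidence count rigorous. The hard part is showing that $\{V\in\mathrm{Gr}_k: V\cap\mathrm{Segre}\ne0\}$ has the expected codimension $2(m-k+1)$. I would prove this via the incidence correspondence $\{(x,V): x\in\mathrm{Segre},\ x\in V\}$. This is a Grassmannian bundle over the Segre variety of complex dimension $(n_1+n_2-2)+(k-1)(N-k)$, and it projects onto the bad locus, so the bad locus has dimension at most that. It then remains to compare this with $\dim \mathrm{Gr}_k=k(N-k)$ and combine with the fibration over $V$ described above.
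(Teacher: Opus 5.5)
Your proof is correct and follows essentially the paper's route: a linear homotopy from each $B\in\mathsf W^c$ to the normalized $P-I/N$, where $P$ is a product projector in the top eigenspace of $B$, followed by connectedness of the Segre variety; and a dimension count stratified by top-eigenspace multiplicity via the incidence correspondence (Segre point, $k$-plane through it, spectral data on the complement), which gives exactly the paper's bound $(k-1)^2+2(n_1-1)(n_2-1)$, while your closedness argument via continuity of the support functions $h_{\mathsf D}=\lambda_{\max}$ and $h_{\mathsf S}$ is a slicker substitute for the paper's subsequence argument. The degenerate case you defer in step 2 cannot occur, so no perturbation is needed: $B_t\psi=\big((1-t)\lambda_{\max}(B)+t(1-1/N)\big)\psi$, the coefficient is strictly positive because $\lambda_{\max}(B)>0$ for traceless nonzero $B$, and hence $B_t\neq0$.
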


\begin{proof}
    The set $\mathsf W^c$ is closed: if $B_j\to B$ and $B_j\in\mathsf W^c$, choose product unit vectors $v_j$ in the top eigenspace of $B_j$. After passing to a subsequence, the corresponding product lines converge to a product line $[v]$ (using compactness of the Segre variety), and the continuity of $\lambda_{\max}$ gives $Bv=\lambda_{\max}(B)v$. Since $\mathbb S(\mathcal H_0)$ is compact, $\mathsf W^c$ is compact. 

    To see semialgebraicity, rewrite the condition $B\in\mathsf W^c$ by introducing a product unit vector $v$ and a real number $\lambda$ such that $Bv=\lambda v$ and $\lambda I-B\geq0$. Product vectors are characterized by the vanishing of the $2\times2$ minors after reshaping $v$ as an $n_1\times n_2$ matrix, and positive semidefiniteness is semialgebraic. The claim then follows from Tarski--Seidenberg (cf.\ \cite{real}*{Chapter 1.4, Chapter 2.2}).

    For path-connectedness, let $B\in \mathsf W^c$ and let $v$ be a unit product vector in the top eigenspace of $B$,  with eigenvalue $\lambda=\lambda_{\max}(B)>0$. Since $v$ is an eigenvector, $B=\lambda\,vv^*+B^\perp$ with $B^\perp$ supported on $v^\perp$, eigenvalues  $\beta_k\le\lambda$, and $\tr[ B^\perp]=-\lambda$. Set $\mu\coloneqq-\lambda/(N-1)<0$ and, for $s\in[0,1]$,
    $$
        B_s\coloneqq\lambda\,vv^*+(1-s)B^\perp+s\mu\,(I-vv^*).
    $$
    Each $B_s$ is traceless with $B_sv=\lambda v$, and its eigenvalues on $v^\perp$ are $(1-s)\beta_k+s\mu\le\lambda$. Hence, $v$ lies in the top eigenspace of $B_s$, and $B_s\ne0$. Thus $s\mapsto B_s/\|B_s\|_2$ is a path in $\mathsf W^c$ from $B$ to
    $$
        M_v:=\frac{vv^*-I/N}{\|vv^*-I/N\|_2}=\frac{B_1}{\|B_1\|_2}.
    $$
    As $[v]\mapsto M_v$ is continuous on the path-connected variety $\mathsf{PS}\cong\CP^{n_1-1}\times\CP^{n_2-1}$, the matrices $M_v$ form a path-connected subset of $\mathsf W^c$ to which every point of $\mathsf W^c$ is joined by such a path.

    It remains to estimate the dimension of $\mathsf W^c$. Let $N=n_1n_2$ and stratify according to the dimension $r$ of the top eigenspace, $1\leq r\leq N-1$; the value $r=N$ cannot occur, since a traceless Hermitian matrix whose top eigenspace is all of $\C^N$ would be a multiple of the identity, hence $0$, contradicting $\|B\|_2=1$. First choose a product line, i.e., an element of the Segre variety $\CP^{n_1-1}\times\CP^{n_2-1}$, which is of real dimension $2(n_1+n_2-2)$. After a product line has been chosen, the $r$-dimensional planes containing it are parametrized by the Grassmannian $\operatorname{Gr}(r-1,N-1)$ and hence contribute real dimension $2(r-1)(N-r)$. Thus the possible top eigenspaces containing a product vector have real dimension at most
    $2(n_1+n_2-2)+2(r-1)(N-r).$

    For a fixed top eigenspace $K$ of dimension $r$, a Hermitian matrix with top eigenspace exactly $K$ has the form $\lambda I_K\oplus C$, with $C\in{\rm Herm}(K^\perp)$ having all eigenvalues smaller than $\lambda$. For a fixed $K$, the pair $(\lambda,C)$ has $1+(N-r)^2$ real parameters when ignoring the trace and Hilbert-Schmidt norm constraints. Since those have independent differentials, they cut out an $(N-r)^2-1$ dimensional submanifold of admissible pairs $(\lambda, C)$. 

    Since
    $\dim_\RR\mathbb S(\mathcal H_0)=N^2-2,$ the codimension of the $r$-th stratum is at least
    \begin{align*}
        &N^2-2-\bigl(2(n_1+n_2-2)+2(r-1)(N-r)+(N-r)^2-1\bigr) \\
        &\qquad=(r-1)^2+2(n_1-1)(n_2-1)\geq2(n_1-1)(n_2-1).\qedhere
    \end{align*}
\end{proof}

Now we can use these properties to prove that the reduced homology of $\mathsf W$ vanishes in low and high degrees:

\begin{prop}[Topology of the set of entanglement witnesses]\label{prop:W}
    The space $\mathsf W$ satisfies\footnote{We write $\pi_0=0$ if $\pi_0$ is a singleton, corresponding to a single path-connected component.} 
    \begin{alignat*}{2}
        &\pi_i(\mathsf W)=0\quad\text{and}\quad
        \widetilde H_i(\mathsf W;\mathbb Z)=0
        &\qquad& \text{for } 0\le i\le 2(n_1-1)(n_2-1)-2,
        \\
        &H_k(\mathsf W;\mathbb Z)=0
        &\qquad& \text{for } k\ge N^2-3.
    \end{alignat*}
\end{prop}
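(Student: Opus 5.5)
The plan is to apply \Cref{lem:gp} with $M=\mathbb S(\mathcal H_0)\cong\mathbb S^{N^2-2}$ and $A=\mathsf W^c$ for the low-degree statement, and \Cref{lem:alexander} with $\Sigma=\mathbb S(\mathcal H_0)$ and the same $A$ for the top-degree statement.

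\textbf{Low degrees.} By \Cref{lem:badcodim}, $\mathsf W^c$ is compact and semialgebraic, and each stratum in its proof has real codimension at least $q:=2(n_1-1)(n_2-1)$. Since $\mathsf W^c$ is semialgebraic, it admits a finite stratification into smooth (Nash) submanifolds of $\mathbb S(\mathcal H_0)$, each of dimension at most $\dim \mathsf W^c$. Hence it is a closed finite union of smooth submanifolds of codimension $\ge q$, and \Cref{lem:gp} applies since $q\ge 2$. It gives $\pi_i(\mathsf W)\cong\pi_i(\mathbb S^{N^2-2})$ and $H_i(\mathsf W)\cong H_i(\mathbb S^{N^2-2})$ for $0\le i\le q-2$. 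Because $q-2<N^2-2$ (indeed $2(n_1-1)(n_2-1)<n_1^2n_2^2$), the sphere's groups vanish in this range apart from $H_0=\Z$ and $\pi_0=0$. This yields path-connectedness, $\pi_i(\mathsf W)=0$ and $\widetilde H_i(\mathsf W;\Z)=0$ for $0\le i\le q-2$.

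The step needing care is the passage from the dimension count in \Cref{lem:badcodim} to the hypothesis of \Cref{lem:gp}. That lemma's estimate bounds the dimension of each stratum indexed by $r$, and a semialgebraic stratification refining it consists of manifolds of no larger dimension. I would cite the semialgebraic stratification theorem from \cite{real} to justify this. Note also that $q\ge2$ holds for all $n_1,n_2\ge2$.

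\textbf{High degrees.} $\mathsf W^c$ is nonempty, since it contains $M_v$. It is a proper subset, since $\mathsf W$ is nonempty: take $vv^*-I/N$ with $v$ entangled and normalize. It is compact, and being semialgebraic it is locally contractible. Alexander duality with $n=N^2-2$ therefore gives $\widetilde H_k(\mathsf W)\cong\widetilde H^{N^2-3-k}(\mathsf W^c)$. For $k>N^2-3$ the right side is zero, being negative-degree cohomology. For $k=N^2-3$ we get $\widetilde H^0(\mathsf W^c)=0$ because $\mathsf W^c$ is path-connected by \Cref{lem:badcodim}. Reduced and unreduced homology agree in positive degree, so $H_k(\mathsf W;\Z)=0$ for $k\ge N^2-3$. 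Alternatively, for $k\ge N^2-2$ the vanishing already follows from $\mathsf W$ being a noncompact open subset of an $(N^2-2)$-manifold, so only degree $N^2-3$ really uses duality.
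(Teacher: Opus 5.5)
Your proposal is correct and follows the paper's proof. Both apply Lemma~\ref{lem:gp} to $A=\mathsf W^c\subset\mathbb S(\mathcal H_0)$ with $q=2(n_1-1)(n_2-1)$ for the low degrees, and both use Alexander duality together with path-connectedness of $\mathsf W^c$ for degree $N^2-3$. The only cosmetic difference is in degrees $k\ge N^2-2$: the paper gets the vanishing by viewing $\mathsf W$ as an open subset of $\mathbb S(\mathcal H_0)\setminus\{B\}\cong\RR^{N^2-2}$, which you mention as your alternative, whereas your main argument reads it off from duality using that $\mathsf W^c\neq\emptyset$.
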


\begin{proof}
    By \Cref{lem:badcodim}, $\mathsf W^c$ is a compact semialgebraic subset of $\mathbb S(\mathcal H_0)$ of real codimension at least $2(n_1-1)(n_2-1)$. Every semialgebraic set is a finite union of smooth submanifolds \cite{real}*{Chapter 9.1}. As $\mathsf W^c$ is closed, \Cref{lem:gp} (with $M=\mathbb S(\mathcal H_0), A=\mathsf W^c$ and $q=2(n_1-1)(n_2-1)$) shows that the inclusion $\mathsf W=\mathbb S(\mathcal H_0)\setminus\mathsf W^c\hookrightarrow\mathbb S(\mathcal H_0)$ induces isomorphisms on $\pi_i$ and on $H_i(-;\Z)$ for $0\le i\le 2(n_1-1)(n_2-1)-2$. Since $\mathbb S(\mathcal H_0)\cong\mathbb S^{\,N^2-2}$, all $\pi_i$ and $\widetilde H_i(-;\Z)$ are trivial in this range.

    For the high degree homology, note first that for any $B\in\mathsf W^c$ we have $$\mathsf W\subseteq\underbrace{\mathbb{S}({\cal H}_0)}_{\cong\;\mathbb{S}^{N^2-2}}\setminus \{B\}\cong \RR^{N^2-2}.$$ As $\mathsf W^c$ is closed,  $\mathsf W$ is thus homeomorphic to an open subset of $\RR^{N^2-2}$ and as such has trivial homology groups for all degrees $k\geq N^2-2$. For $k=N^2-3$ we use Alexander duality from \Cref{lem:alexander} and obtain
    $$H_k(\mathsf W;\Z)\cong \widetilde H_k\big(\mathbb{S}({\cal H}_0)\setminus\mathsf W^c;\Z\big)\cong \widetilde H^{N^2-3-k}(\mathsf W^c;\Z)=\widetilde H^{0}(\mathsf W^c;\Z)=0,$$
    where the last step is due to path-connectedness of $\mathsf W^c$ following \Cref{lem:badcodim}.
\end{proof}

Now we relate the spaces $\mathsf W$ and $\mathsf E$ topologically:

\begin{theorem}
    The homotopy equivalence $\mathsf E\simeq\mathsf W$ holds.
\end{theorem}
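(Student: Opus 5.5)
The plan is to compare $\mathsf E$ and $\mathsf W$ through the incidence relation ``$W$ detects $\sigma$''. Set
$$
\mathsf Z\coloneqq\{(\sigma,W)\in\mathsf E\times\mathsf W:\ \tr[\sigma W]>h_{\mathsf S}(W)\}.
$$
This relation is well adapted to both factors. If $\sigma\in\mathsf D$ and $W\in\mathbb S(\mathcal H_0)$ satisfy $\tr[\sigma W]>h_{\mathsf S}(W)$, then automatically $\sigma\notin\mathsf S$, and $\lambda_{\max}(W)\ge\tr[\sigma W]>h_{\mathsf S}(W)$. So the constraints $\sigma\in\mathsf E$ and $W\in\mathsf W$ come for free.

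The key observation is that both slices of $\mathsf Z$ are convex in a suitable sense.

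\begin{itemize}
\item For fixed $W$, the slice is $\{\sigma\in\mathsf D:\tr[\sigma W]>h_{\mathsf S}(W)\}$. This is an open half-space intersected with $\mathsf D$, hence convex. It is nonempty because the top eigenvector of $W$ lies in it.
\item For fixed $\sigma$, consider $C_\sigma\coloneqq\{X\in\mathcal H_0:\tr[\sigma X]-h_{\mathsf S}(X)>0\}$. Since $h_{\mathsf S}$ is a support function, it is sublinear, so $X\mapsto\tr[\sigma X]-h_{\mathsf S}(X)$ is superlinear. Therefore $C_\sigma$ is a convex cone not containing $0$. By the separation argument in the proof of \Cref{thm:path} it is nonempty for $\sigma\in\mathsf E$ (subtract the trace part of the separating functional). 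The fibre is $C_\sigma\cap\mathbb S(\mathcal H_0)$.
\item Since $h_{\mathsf S}$ is continuous (indeed Lipschitz), $\mathsf Z$ is open in $\mathsf E\times\mathsf W$.
\end{itemize}

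\textbf{Constructing the maps.} I build $f\colon\mathsf E\to\mathsf W$ and $g\colon\mathsf W\to\mathsf E$ with graphs in $\mathsf Z$ by partitions of unity.

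For $f$, cover $\mathsf E$ by the open sets $U_W\coloneqq\{\sigma\in\mathsf E:\tr[\sigma W]>h_{\mathsf S}(W)\}$ with $W\in\mathsf W$. These cover $\mathsf E$ by nonemptiness of $C_\sigma$. Since $\mathsf E$ is metrizable, hence paracompact, choose a subordinate locally finite partition of unity $(\varphi_i)$ with $\operatorname{supp}\varphi_i\subset U_{W_i}$, and set
$$
F(\sigma)\coloneqq\sum_i\varphi_i(\sigma)W_i,\qquad f(\sigma)\coloneqq\frac{F(\sigma)}{\|F(\sigma)\|_2}.
$$
By superlinearity,
$$
\tr[\sigma F(\sigma)]-h_{\mathsf S}(F(\sigma))\ \ge\ \sum_i\varphi_i(\sigma)\bigl(\tr[\sigma W_i]-h_{\mathsf S}(W_i)\bigr)>0,
$$
so $F(\sigma)\ne0$ and $(\sigma,f(\sigma))\in\mathsf Z$.

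Symmetrically, cover $\mathsf W$ by $V_\sigma\coloneqq\{W:\tr[\sigma W]>h_{\mathsf S}(W)\}$ with $\sigma\in\mathsf E$. These cover $\mathsf W$ because the top eigenvector of each $W$ gives a point of its slice. With a partition of unity $(\psi_j)$ subordinate to $V_{\sigma_j}$, set $g(W)\coloneqq\sum_j\psi_j(W)\sigma_j$. This lies in the half-space defined by $W$ by linearity of $\sigma\mapsto\tr[\sigma W]$, so $(g(W),W)\in\mathsf Z$.

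\textbf{The homotopies.} Both homotopies are straight lines inside convex fibres.

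For $g\circ f\simeq{\rm id}_{\mathsf E}$: both $\sigma$ and $g(f(\sigma))$ lie in the convex slice over $f(\sigma)$. So $(1-t)\sigma+t\,g(f(\sigma))$ stays in that slice, hence in $\mathsf E$, and it is continuous in $(\sigma,t)$.

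For $f\circ g\simeq{\rm id}_{\mathsf W}$: both $W$ and $f(g(W))$ lie in the convex cone $C_{g(W)}$. So $(1-t)W+t\,f(g(W))$ lies in $C_{g(W)}$, in particular is nonzero. Normalizing it gives a homotopy inside $\mathsf W$.

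\textbf{Main obstacle.} I expect the only delicate points to be the nonemptiness and openness facts underlying the two covers. One needs continuity of $h_{\mathsf S}$ and, for $\sigma\in\mathsf E$, existence of a traceless witness $W$ with $\tr[\sigma W]>h_{\mathsf S}(W)$. The latter needs a check that removing the trace part of the separating affine functional from \Cref{thm:path} does not destroy the strict inequality. It does not: on trace-one matrices, adding a multiple of $I$ shifts both sides of $\tr[\sigma W]>h_{\mathsf S}(W)$ equally. Everything else reduces to the superlinearity of $X\mapsto\tr[\sigma X]-h_{\mathsf S}(X)$.
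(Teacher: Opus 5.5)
Your proof is correct. Its homotopy half is the same as the paper's, but you build the two maps by a different route.

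\textbf{What coincides.} The paper also uses straight-line homotopies, justified exactly as you do:
\begin{itemize}
\item $\rho$ and $\Phi(R(\rho))$ lie in the open half-space defined by $R(\rho)$;
\item $W$ and $R(\Phi(W))$ both lie in what you call $C_{\Phi(W)}$.
\end{itemize}
The paper's check that $\tr[\omega_t(\rho-\sigma)]>0$ for all $\sigma\in\mathsf S$ is your superlinearity of $X\mapsto\tr[\sigma X]-h_{\mathsf S}(X)$ in another form.

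\textbf{What differs.} The paper writes both maps down explicitly.
\begin{itemize}
\item $R(\rho)=(\rho-\Theta(\rho))/\|\rho-\Theta(\rho)\|_2$ uses the metric projection onto $\mathsf S$. This needs \Cref{lem:projection}: uniqueness, Lipschitz continuity and the variational inequality.
\item $\Phi(W)$ is a Gibbs state whose inverse temperature is tuned to the gap $\lambda_{\max}(W)-h_{\mathsf S}(W)$. This needs the estimate \eqref{eq:Gibbsestimate}.
\end{itemize}
You instead get both maps as continuous selections of the open relation $\mathsf Z$ via partitions of unity. You use only three facts: $\mathsf Z$ is open, its fibres over both factors are nonempty, and those fibres are convex.

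\textbf{What each route buys.} Your route is softer and more general. It is essentially the principle that an open relation between paracompact spaces with nonempty convex fibres on both sides yields a homotopy equivalence. It shows that the only structural input is the sublinearity of the support function $h_{\mathsf S}$, with no quantitative estimate needed. The paper's route gives canonical, explicit maps. They are equivariant under local unitaries, and $R(\rho)$ is the Hilbert--Schmidt-optimal witness for $\rho$. The same projection $\Theta$ also drives the deformation retraction of $\overline{\mathsf E}$ onto $\partial_*\mathsf E$ in \Cref{prop:Bor}.
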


\begin{proof} 
    We have to show that there are continuous maps $\smash{\mathsf E\xrightarrow{R}\mathsf W\xrightarrow{\Phi}\mathsf E}$ whose compositions are homotopic to the respective identity, i.e., $R\circ\Phi\simeq{\rm id}_{\mathsf W}$ and $\Phi\circ R\simeq{\rm id}_{\mathsf E}$.

    For $\rho\in\mathsf E$, define
    $$
        R(\rho)\coloneqq\frac{\rho-\Theta(\rho)}{\|\rho-\Theta(\rho)\|_2}.
    $$
    This is continuous by Lemma~\ref{lem:projection}. Moreover, the  inequality in Lemma~\ref{lem:projection} yields 
    $$\tr[R(\rho)\sigma]\leq \tr[R(\rho)\Theta(\rho)]$$ for all $\sigma\in\mathsf S,$
    while
    $$
        \tr[R(\rho)\rho]=\tr[R(\rho)\Theta(\rho)]+\|\rho-\Theta(\rho)\|_2>\tr[R(\rho)\Theta(\rho)].
    $$
    Thus $\lambda_{\max}(R(\rho))>h_{\mathsf S}(R(\rho))$, and we have a map $R:\mathsf E\rightarrow\mathsf W$ that maps every entangled state to an entanglement witness, which `witnesses' the entanglement of the given state.

    For $W\in\mathsf W$ set $\Delta(W)\coloneqq\lambda_{\max}(W)-h_{\mathsf S}(W)>0$ and $\beta(W)\coloneqq 2N/(e\Delta(W))$. Define
    $$
        \Phi(W)\coloneqq\frac{e^{\beta(W)W}}{\tr[e^{\beta(W)W}]}.
    $$
    This is continuous, since $\lambda_{\max}$ and $h_{\mathsf S}$ are continuous. Let $\lambda_1=\lambda_{\max}(W)\geq\lambda_2\geq\cdots\geq\lambda_N$ be the eigenvalues of $W$ and set $\delta_j\coloneqq\lambda_1-\lambda_j$. Then 
    \begin{equation}\label{eq:Gibbsestimate}
        \lambda_1-\tr[W\Phi(W)]
        =\frac{\sum_j \delta_j e^{-\beta(W)\delta_j}}{\sum_j e^{-\beta(W)\delta_j}}
        \leq \sum_j \delta_j e^{-\beta(W)\delta_j}
        \leq \frac{N}{e\beta(W)}=\frac{\Delta(W)}2, 
    \end{equation}
    where we have used $xe^{-\beta x}\leq1/(e\beta)$ for $x\geq0$. Hence, $$\tr[W\Phi(W)]\ge h_{\mathsf S}(W)+\Delta(W)/2>h_{\mathsf S}(W),$$ so $\Phi(W)\in\mathsf E$ is an entangled state, whose entanglement is `witnessed' by $W$.

    For $\rho\in\mathsf E$ and $W=R(\rho)$, both $\rho$ and $\Phi(W)$ are strictly separated from $\mathsf S$ by the linear functional $\tau\mapsto\tr(W\tau)$. This holds as well for the entire line segment $(1-t)\rho+t\Phi(R(\rho))$, $0\le t\le1$, which therefore lies in $\mathsf E$. This is a homotopy $\Phi\circ R\simeq{\rm id}_{\mathsf E}$.

    For the reverse composition, fix $W\in\mathsf W$ and set $\rho=\Phi(W)$. Both $W$ and $R(\rho)$ strictly separate $\rho$ from $\mathsf S$: (\ref{eq:Gibbsestimate}) implies $\tr[W(\rho-\sigma)]\ge\Delta(W)/2>0$, and the  inequality in Lemma~\ref{lem:projection} gives $\tr[R(\rho)(\rho-\sigma)]\ge\|\rho-\Theta(\rho)\|_2>0$, for all $\sigma\in\mathsf S$. Hence, the traceless matrices $\omega_t\coloneqq(1-t)W+tR(\rho)$ satisfy $\tr[\omega_t(\rho-\sigma)]>0$ for all $t\in[0,1]$ and $\sigma\in\mathsf S$. In particular, $\omega_t\ne0$ and, since $\rho$ is a state, $\lambda_{\max}(\omega_t)\ge\tr[\omega_t\rho]>h_{\mathsf S}(\omega_t)$. Consequently, $W_t(W)=\omega_t/\|\omega_t\|_2$ lies in $\mathsf W$, and $(t,W)\mapsto W_t(W)$ is continuous with $W_0={\rm id}_{\mathsf W}$ and $W_1=R\circ\Phi$. This is a homotopy $R\circ\Phi\simeq{\rm id}_{\mathsf W}$, which completes the proof of $\mathsf W\simeq \mathsf E$.
\end{proof}

We can now use this equivalence to transfer topological information about $\mathsf W$ to $\mathsf E$ (and vice versa). Proposition \ref{prop:W} gives:

\begin{corollary}[Trivial topology of $\mathsf E$ in low and high degree]
    \label{cor:EHtails}
    The space $\mathsf E$ satisfies 
    \begin{alignat*}{2}
        &\pi_i(\mathsf E)=0\quad\text{and}\quad
         \widetilde H_i(\mathsf E;\mathbb Z)=0
        &\qquad& \text{for } 0\le i\le 2(n_1-1)(n_2-1)-2,\ \text{ and}
        \\
        &H_k(\mathsf E;\mathbb Z)=0
        &\qquad& \text{for } k\ge (n_1 n_2)^2-3.
    \end{alignat*}
    In particular, $\mathsf E$ is simply connected, unless $n_1=n_2=2$. 
\end{corollary}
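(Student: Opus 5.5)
This corollary follows by transporting Proposition~\ref{prop:W} along the homotopy equivalence $\mathsf E\simeq\mathsf W$ from the preceding theorem. Homotopy equivalences induce isomorphisms on all homotopy groups, for any basepoint, and on all singular homology groups, including reduced homology. Since $\mathsf E$ is path-connected (Theorem~\ref{thm:path}), the choice of basepoint is irrelevant. Hence for $0\le i\le 2(n_1-1)(n_2-1)-2$ we get $\pi_i(\mathsf E)\cong\pi_i(\mathsf W)=0$ and $\widetilde H_i(\mathsf E;\Z)\cong\widetilde H_i(\mathsf W;\Z)=0$. Likewise $H_k(\mathsf E;\Z)\cong H_k(\mathsf W;\Z)=0$ for $k\ge N^2-3=(n_1n_2)^2-3$.

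For the final sentence, I will check when the range $0\le i\le 2(n_1-1)(n_2-1)-2$ contains $i=1$. This happens exactly when $(n_1-1)(n_2-1)\ge 2$. With $n_1,n_2\ge2$, this fails only if $(n_1-1)(n_2-1)=1$, i.e.\ $n_1=n_2=2$. In every other case $\pi_1(\mathsf E)=0$. Combined with path-connectedness ($i=0$), this makes $\mathsf E$ simply connected.

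No step here is a real obstacle. The only point needing care is that the homotopy equivalence was constructed as maps $R,\Phi$ with both composites homotopic to the identity, not merely as a weak equivalence. It therefore induces isomorphisms on homology directly, and on $\pi_i$ at corresponding basepoints, which suffices by path-connectedness.
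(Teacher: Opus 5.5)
Your proposal is correct and matches the paper's argument: the corollary comes from transporting Proposition~\ref{prop:W} along the homotopy equivalence $\mathsf E\simeq\mathsf W$. The simple-connectivity statement then follows because the range $0\le i\le 2(n_1-1)(n_2-1)-2$ contains $i=1$ exactly when $(n_1,n_2)\neq(2,2)$.
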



\section{Euler characteristic and non-trivial homology}\label{sec:Euler}

The aim of this section is to compute the \emph{Euler characteristic} of $\mathsf E$ and thereby show that $\mathsf E$ has non-trivial topology in any  dimension. This will be achieved by studying the compact set   $\mathsf X:= \mathsf S \cap \partial \mathsf D$ of singular separable states and relating it to $\mathsf E$ via Alexander duality.  In order to simplify the analysis of $\mathsf X$, we will equip the space with the action of the group $G:=(\mathbb{S}^{1})^{n_1} \times (\mathbb{S}^{1})^{n_2}$ and relate the topology of $\mathsf X$ to that of its fixed point set under this torus action. 

To every $g_i\in (\mathbb{S}^1)^{n_i}$ we associate a diagonal unitary $U_{g_i}\in U(n_i)$ by identifying each $\mathbb{S}^1$ with the complex unit circle that contains the entries of one of the diagonal elements. Mapping any $g=(g_1,g_2)\in G$  to a diagonal unitary $U_g=U_{g_1}\otimes U_{g_2}$, we define a group action $G\times \mathsf X\rightarrow \mathsf X$ by $(g,\rho)\mapsto g\cdot\rho\coloneqq U_g \rho U_g^*$. Note that the product structure and invertibility of $U_g$ guarantee that this remains in $\mathsf S$ and $\partial \mathsf D$, respectively. So $\mathsf X$ becomes a `$G$-space' and we will denote the set of fixed points as $\mathsf X^G:=\{\rho\in\mathsf X\mid \forall g\in G: g\cdot\rho=\rho\}$.

For any space $\mathsf Y$ whose integral homology groups are finitely generated and vanish above some degree, the \emph{Euler characteristic} is defined as 
\begin{equation}\label{eq:chi}
    \chi(\mathsf Y) \coloneqq \sum_i (-1)^i {\rm rank }\; H_i(\mathsf Y;\Z) =\sum_i (-1)^i \dim_\F H_i(\mathsf Y;\F),
\end{equation}
where the equality holds over any field $\F$ due to the universal coefficient theorem \cite{hatcher}*{Section 3.A}. As $\mathsf X$ is homeomorphic to a finite simplicial complex (since it is a  compact semialgebraic set \cite{real}*{Theorem 9.2.1}) it meets the assumptions,  so that $\chi(\mathsf X)$ is well-defined. Since also the integral homology groups of $\mathsf E$ are finitely generated and vanish outside finitely many degrees, we get that (\ref{eq:chi}) can be safely applied to $\mathsf E$ as well.

Following \cite{Puppe}*{Corollary 3.1.13, Corollary 3.1.14} (see also \cite{Sikora}), the topology of $\mathsf X$ is related to the one of its fixed point set $\mathsf X^G$ via:
\begin{eqnarray}
    \chi(\mathsf X)&=&\chi\big(\mathsf X^G\big) \label{eq:chiG}\\
    \sum_{i=0}^\infty \dim_\QQ H^{m+2i}(\mathsf X;\QQ) &\geq& \sum_{i=0}^\infty \dim_\QQ H^{m+2i}\big(\mathsf X^G;\QQ\big)\qquad\forall m\geq 0.\label{eq:cohG}
\end{eqnarray}
This holds under the following assumptions, which are all met in our case:
\begin{enumerate}
    \item[($i$)] $\mathsf X$ has finite-dimensional rational cohomology. This holds for any compact semialgebraic set as those are homeomorphic to a finite simplicial complex \cite{real}*{Theorem 9.2.1}.
    \item[($ii$)] $\mathsf X$ is a finite-dimensional $G$-CW complex. This holds by \cite{ParkSuh}*{Theorem 5.2}: any semialgebraic $G$-space $\mathsf X$ that is closed in the representation space (which in our case is the space of Hermitian matrices) is, up to a $G$-equivariant homeomorphism, a finite, and thus finite-dimensional, $G$-CW complex. 
    \item[($iii$)] The set $\{G_x\mid x\in \mathsf X \}$ of \emph{orbit types} of $\mathsf X$ is finite, where $G_x\coloneqq\{g\in G\mid  gx=x\}$ is the stabilizer of $x$. For a compact Lie group $G$, this holds for any semialgebraic $G$-space (\cite{ParkSuh}*{p. 377}).
\end{enumerate}

The fixed point set is the intersection of $\mathsf X$ with the commutant of $\{U_g\mid g\in G\}$. Since the commutant of $\{U_{g_i}\in U(n_i)\mid g_i\in(\mathbb{S}^{1})^{n_i}\}$ is the set of diagonal matrices and, for von Neumann algebras, the commutant of a tensor product is the tensor product of the commutants, we get that $\mathsf X^G$ consists of all diagonal, singular separable states. More specifically, expressed in computational product basis, we obtain
$$\mathsf X^G = \Big\{\rho\in\partial\mathsf D\;\big|\;\rho=\sum_{i=1}^{n_1}\sum_{j=1}^{n_2}p_{ij}|i\rangle\langle i|\otimes|j\rangle\langle j|, p_{ij}\geq 0, \sum_{ij} p_{ij}=1\Big\}.$$
Since the probability distributions form a simplex $\Delta^{N-1}$, and $\mathsf X^G$ is on the boundary, we have homeomorphisms $\mathsf X^G\cong\partial\Delta^{N-1}\cong \mathbb{S}^{N-2}$. Now we are equipped to prove the following:

\begin{theorem}[Euler characteristic and non-trivial homology]\label{thm:chi}
    The Euler characteristic of $\mathsf E$ vanishes: $\chi(\mathsf E)=0$. This implies that $\mathsf E$ has non-trivial reduced homology $\widetilde H_*(\mathsf E;\F)\neq 0$ for every field $\F$. 
\end{theorem}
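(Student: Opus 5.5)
The plan is to compute $\chi(\mathsf X)$ via the torus fixed-point formula and then transfer the result to $\mathsf E$ using Alexander duality.

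First, by \eqref{eq:chiG} and the identification $\mathsf X^G\cong\mathbb S^{N-2}$ we get $\chi(\mathsf X)=\chi(\mathbb S^{N-2})=1+(-1)^{N-2}=1+(-1)^N$. Next we use \Cref{cor:alexander-reduction}, which gives $\widetilde H_i(\mathsf E;\Z)\cong\widetilde H^{N^2-3-i}(\mathsf X;\Z)$ for all $i$. Write $\widetilde\chi=\chi-1$ for the reduced Euler characteristic. Since $\mathsf X$ is a finite complex, the ranks of its cohomology and homology agree in each degree by the universal coefficient theorem. Summing over $i$ therefore gives
$$\widetilde\chi(\mathsf E)=\sum_i(-1)^i\operatorname{rank}\widetilde H^{N^2-3-i}(\mathsf X)=(-1)^{N^2-3}\,\widetilde\chi(\mathsf X)=(-1)^{N+1}\,\widetilde\chi(\mathsf X),$$
where we used $N^2\equiv N \pmod 2$. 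Substituting $\widetilde\chi(\mathsf X)=(-1)^N$ yields $\widetilde\chi(\mathsf E)=(-1)^{2N+1}=-1$, and hence $\chi(\mathsf E)=0$.

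The step that needs the most care is justifying that the rank sums are finite and well defined on the $\mathsf E$ side. This follows from the isomorphism with the finitely generated cohomology of $\mathsf X$, which vanishes above $\dim \mathsf X$. Since $\mathsf E$ is noncompact, this is exactly why we pass through Alexander duality rather than arguing on $\mathsf E$ directly. The other point to check is the parity bookkeeping of $(-1)^{N^2-3}$ together with the sign of $\chi(\mathbb S^{N-2})$; both cases $N$ even and $N$ odd give $\widetilde\chi(\mathsf E)=-1$.

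For the non-triviality statement, fix a field $\F$. By \eqref{eq:chi}, $\chi(\mathsf E)=\sum_i(-1)^i\dim_\F H_i(\mathsf E;\F)=0$. By \Cref{thm:path}, $H_0(\mathsf E;\F)=\F$ has dimension $1$. If $\widetilde H_*(\mathsf E;\F)$ vanished, then $\chi(\mathsf E)$ would equal $1$, a contradiction. Hence some $\widetilde H_i(\mathsf E;\F)$ with $i\ge1$ is nonzero, and in fact it must occur in odd degree. Combining this with \Cref{cor:EHtails} further locates the degree in the range $2(n_1-1)(n_2-1)-1\le i\le N^2-4$.
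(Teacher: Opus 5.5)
Your argument is the paper's: you get $\chi(\mathsf X)=1+(-1)^N$ from the torus fixed-point formula, then transfer it by Alexander duality and reindex. Using the reduced Euler characteristic is a slight streamlining. With it, the term $\widetilde H_0(\mathsf E)\cong\widetilde H^{N^2-3}(\mathsf X)$ is matched automatically by duality. The paper instead splits off $H_0(\mathsf E)$ and has to show $\widetilde H^{N^2-3}(\mathsf X)=0$ separately, via path-connectedness of $\mathsf E$.

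The step you skip is the top degree of $\mathsf X$. The sum defining $\widetilde\chi(\mathsf E)$ runs over $i\ge 0$, so after substituting $j=N^2-3-i$ it only sees the degrees $j\le N^2-3$ of $\mathsf X$. Equating it with $(-1)^{N^2-3}\widetilde\chi(\mathsf X)$ therefore requires $H^{N^2-2}(\mathsf X;\Z)=0$.

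Your remark that the cohomology of $\mathsf X$ vanishes above $\dim\mathsf X$ does not give this. The reason is that $\mathsf X\subset\partial\mathsf D\cong\mathbb S^{N^2-2}$ can have dimension exactly $N^2-2$. For two qubits, $\mathsf X$ and its linear image $\Gamma(\mathsf X)=\partial_*\mathsf E$ together cover $\partial\mathsf S\cong\mathbb S^{14}$, so $\dim\mathsf X=14$.

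The missing input is that $\mathsf X$ is a \emph{proper} compact subset of the sphere, which forces its top-degree cohomology to vanish. The paper quotes Brasher's lemma for this. It equally follows from Alexander duality in degree $i=-1$, since $\widetilde H_{-1}(\mathsf E)=0$ for $\mathsf E\neq\emptyset$ in the augmented-complex convention. With that added, your proof of the statement is complete.

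As a side note, your final localization of the nonzero group to degrees $\le N^2-4$ is a statement with $\F$-coefficients, whereas the vanishing you invoke from the corollary is integral. Excluding degree $N^2-3$ over $\F$ needs the universal coefficient theorem together with the fact that $H_{N^2-4}(\mathsf E;\Z)\cong H^1(\mathsf X;\Z)$ is torsion-free.
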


\begin{proof}
    The cohomology groups of a sphere $\mathbb{S}^n$ are trivial except in degrees $i\in\{0,n\}$ where $H^i(\mathbb{S}^n;\Z)=\Z$. Consequently, $\chi(\mathsf X)=\chi\big(\mathsf X^G\big)=\chi\big(\mathbb{S}^{N-2}\big)=1+(-1)^{N}$. 
    From here, we obtain the Euler characteristic of $\mathsf E$ via the following chain of equalities, which we justify subsequently:
    \begin{eqnarray}
        \chi(\mathsf E) &=& 1+ \sum_{i=1}^{N^2-3}(-1)^i {\rm rank }\;\widetilde H_i(\mathsf E;\Z)\label{eq:chieq1}\\ \label{eq:chieq2}
        &=&1+\sum_{j=0}^{N^2-4} (-1)^{N^2-3-j} {\rm rank}\;\widetilde H^j(\mathsf X;\Z)\\ \label{eq:chieq3}
        &=& 1+ (-1)^{N^2-3} \big(\chi(\mathsf X)-1\big)\\
        &=& 0.\nonumber
    \end{eqnarray}
    Equation (\ref{eq:chieq1}) uses that ${\rm rank}\; H_0(\mathsf E;\Z)=1$ by \Cref{thm:path} and that ${\rm rank}\; H_i(\mathsf E;\Z)={\rm rank}\; \widetilde H_i(\mathsf E;\Z)$ for all $i\geq1$.
    Equation (\ref{eq:chieq2}) uses Alexander duality via \Cref{cor:alexander-reduction} and substitutes the summation index $j=N^2-3-i$.
    In order to arrive at Equation (\ref{eq:chieq3}) we will show that $${\rm rank}\; \widetilde H^{N^2-3}(\mathsf X;\Z)={\rm rank}\; \widetilde H^{N^2-2}(\mathsf X;\Z)=0.$$ Under this assumption, we can then extend the range of summation in \eqref{eq:chieq2} and obtain the Euler characteristic expression in \eqref{eq:chieq3} by exploiting that ${\rm rank}\;  H^i(\mathsf X;\Z)={\rm rank}\;  H_i(\mathsf X;\Z)$. The latter follows from the universal coefficient theorem together with finitely generated homology. Finally, Equation \eqref{eq:chieq3} yields the result $\chi(\mathsf E)=0$ after inserting $\chi(\mathsf X)=1+(-1)^N$.

    The fact that ${\rm rank}\; \widetilde H^{N^2-3}(\mathsf X;\Z)=0$ follows from \Cref{cor:alexander-reduction}, which asserts that $\widetilde H^{N^2-3}(\mathsf X;\Z)\cong \widetilde H_0(\mathsf E;\Z)$, together with $\widetilde H_0(\mathsf E;\Z) = 0$ due to path-connectedness of $\mathsf E$ by \Cref{thm:path}. 

    That ${\rm rank}\; \widetilde H^{N^2-2}(\mathsf X;\Z)=0$ follows from the fact that $\mathsf X$ is a proper closed subset of  $\partial\mathsf D\cong\mathbb{S}^{N^2-2}$, so that $H^{N^2-2}(\mathsf X;\Z)=0$ (\cite{Brasher}*{Lemma 1}; this applies to \v{C}ech cohomology, which however agrees with singular cohomology for locally contractible spaces).

    Finally, consider the definition of the Euler characteristic in \eqref{eq:chi}. By \Cref{thm:path}, $\mathsf E$ has one path-connected component and thus $\dim_\F H_0(\mathsf E;\F)=1$. So $\chi(\mathsf E)=0$ implies that at least one other homology group must be non-trivial for any field $\F$.
\end{proof}

The following bounds the range in which a non-trivial homology group can be found, and shows that there is a non-torsion integral homology in an odd degree:

\begin{prop}
    There is at least one odd $k$ in the range $2(n_1-1)(n_2-1)-1\leq k\leq N^2-N-1$ for which ${\rm rank }\;H_k(\mathsf E;\Z)\neq 0$.
\end{prop}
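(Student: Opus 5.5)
The idea is to use the fixed-point inequality \eqref{eq:cohG} rather than the Euler characteristic. It produces a non-zero rational cohomology class of $\mathsf X=\mathsf S\cap\partial\mathsf D$ in a degree of controlled parity and size. Alexander duality (\Cref{cor:alexander-reduction}) then carries this class to homology of $\mathsf E$ in an odd degree, and the vanishing results pin that degree into the stated range.

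\emph{Step 1 (a non-zero class on $\mathsf X$).} We apply \eqref{eq:cohG} with $m=N-2$. Since $\mathsf X^G\cong\mathbb S^{N-2}$, the right-hand side is at least $\dim_\QQ H^{N-2}(\mathbb S^{N-2};\QQ)=1$. Hence there is some $i\ge0$ with $H^{j}(\mathsf X;\QQ)\neq0$ for $j=N-2+2i$. In particular $j\ge N-2$ and $j\equiv N \pmod 2$. Because $N\ge4$, we have $j\ge 2$, so reduced and unreduced cohomology agree in this degree. By the universal coefficient theorem together with finite generation, this gives ${\rm rank}\,\widetilde H^j(\mathsf X;\Z)\neq0$.

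\emph{Step 2 (upper bound on $j$).} In the proof of \Cref{thm:chi} it was shown that $\widetilde H^{N^2-3}(\mathsf X;\Z)$ and $\widetilde H^{N^2-2}(\mathsf X;\Z)$ have rank zero. Cohomology above $N^2-2$ vanishes, since $\mathsf X$ is a finite simplicial complex sitting properly inside $\mathbb S^{N^2-2}$; this can also be seen by Alexander duality producing negative degrees. Therefore $j\le N^2-4$.

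\emph{Step 3 (transfer to $\mathsf E$).} Set $k\coloneqq N^2-3-j$. By \Cref{cor:alexander-reduction}, ${\rm rank}\,\widetilde H_k(\mathsf E;\Z)={\rm rank}\,\widetilde H^j(\mathsf X;\Z)\neq0$. From $j\ge N-2$ we get $k\le N^2-N-1$, and from $j\le N^2-4$ we get $k\ge1$, so $H_k=\widetilde H_k$. For the parity, $k\equiv N^2-3-N \pmod 2$, and $N^2-N=N(N-1)$ is even, so $k$ is odd.

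\emph{Step 4 (lower bound on $k$).} By \Cref{cor:EHtails}, $\widetilde H_k(\mathsf E;\Z)=0$ for $k\le 2(n_1-1)(n_2-1)-2$. Since our $H_k$ is non-zero, it follows that $k\ge 2(n_1-1)(n_2-1)-1$.

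The only delicate point is the bookkeeping in Step 1. One must make sure that \eqref{eq:cohG} is applied in the right degree shift $m=N-2$, so that the resulting degree has the parity of $N$. One must also check that rational non-vanishing translates into non-zero integral rank; this is where finite generation and the universal coefficient theorem enter. Everything else is a direct combination of results already established in the paper.
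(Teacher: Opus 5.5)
Your proof is correct and follows the paper's argument: apply \eqref{eq:cohG} with $m=N-2$ using $\mathsf X^G\cong\mathbb S^{N-2}$, pass to integral ranks via the universal coefficient theorem, transfer to $\mathsf E$ via \Cref{cor:alexander-reduction} to get an odd $k=N^2-N-1-2i$, and bound $k$ from below using \Cref{cor:EHtails}. Your extra bookkeeping fills in details the paper leaves implicit: reduced and unreduced cohomology agree since $j\ge N-2\ge 2$, and $j\le N^2-4$ guarantees $k\ge 1$.
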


\begin{proof}
    Using that $\mathsf X^G\cong\mathbb{S}^{N-2}$ and thus $H^{N-2}(\mathsf X^G;\QQ)=\QQ$, we obtain from \eqref{eq:cohG} with $m=N-2$ that
    \begin{equation}
       1 \leq\sum_{i\geq 0}\dim_\QQ H^{N-2+2i}(\mathsf X;\QQ)=\sum_{i\geq 0}{\rm rank}\; H^{N-2+2i}(\mathsf X;\Z).\label{eq:1leq}
    \end{equation}
    Here we have used in addition that $\dim_\QQ H^{i}(\mathsf X;\QQ)={\rm rank}\; H^{i}(\mathsf X;\Z)$, which follows from the universal coefficient theorem since $\mathsf X$ has finitely generated homology.

    Applying Alexander duality in the form of \Cref{cor:alexander-reduction} to \eqref{eq:1leq}, we get that there must be some $k=N^2-N-1-2i$ and $i\geq 0$ for which ${\rm rank}\; H_{k}(\mathsf E;\Z)\neq 0$. The lower bound on $k$ follows from \Cref{cor:EHtails}.
\end{proof}


\section{Two Qubits}\label{sec:22}

We have seen that the case of two qubits, i.e.\ $n_1=n_2=2$, is special among the above results. This lowest dimensional non-trivial case is also  special, as the Peres--Horodecki criterion \cite{hhh} is not only necessary but indeed sufficient for entanglement:
$$\sigma\in \mathsf E\ \Leftrightarrow\ \Gamma(\sigma)\not\geqslant 0$$
where $\Gamma$ denotes the partial transpose with respect to the second subsystem (or the first, the choice is irrelevant). Note that in this case, the ambient dimension is $(2\cdot 2)^2-1=15$.

For our first definition we need a preliminary  lemma (which is well-known, see for example \cite{stv}, we include a proof for completeness).

\begin{lemma}
\label{unique}
    Let $\sigma \in \mathsf E$. Then $\Gamma(\sigma)$ has exactly one negative eigenvalue (including multiplicities).  
\end{lemma}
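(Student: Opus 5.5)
By the Peres--Horodecki criterion, $\sigma\in\mathsf E$ means $\Gamma(\sigma)\not\geqslant 0$, so $\Gamma(\sigma)$ has at least one negative eigenvalue. It therefore suffices to rule out two or more negative eigenvalues. We argue by contradiction: suppose $\Gamma(\sigma)$ has two orthonormal eigenvectors with negative eigenvalues, spanning a two-dimensional subspace $V\subset\C^2\otimes\C^2$ on which $\langle v,\Gamma(\sigma)v\rangle<0$ for all $v\neq 0$.

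First we find a product vector in $V$. Reshaping vectors of $\C^2\otimes\C^2$ as $2\times 2$ matrices, product vectors are exactly the nonzero matrices with vanishing determinant. Take a basis $A,B$ of $V$ in matrix form. The function $\det(A+zB)$ is a polynomial in $z$ of degree at most $2$.
\begin{itemize}
\item If it is nonconstant, it has a root $z_0\in\C$, and $A+z_0B\neq 0$ is a product vector.
\item If it is a nonzero constant, we use $\det(wA+B)$ instead, whose leading coefficient vanishes; either $\det B=0$ directly or we obtain a root.
\end{itemize}
In all cases the two-dimensional subspace $V$ meets the Segre cone $\{\det=0\}$ nontrivially. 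This also follows from the general fact that a quadric hypersurface in $\CP^3$ meets every line.

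So there is a product vector $x\otimes y\in V$. Then
$$\langle x\otimes y,\Gamma(\sigma)(x\otimes y)\rangle=\langle x\otimes\bar y,\sigma(x\otimes\bar y)\rangle\geq 0,$$
using $\tr[\Gamma(\sigma)(P\otimes Q)]=\tr[\sigma(P\otimes Q^T)]$ and $Q^T=\overline{Q}=|\bar y\rangle\langle\bar y|$. This contradicts the strict negativity of $\Gamma(\sigma)$ on $V$. Hence $\Gamma(\sigma)$ has exactly one negative eigenvalue counted with multiplicity.

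The only delicate step is justifying the existence of a product vector in every two-dimensional subspace, including the degenerate case where the determinant polynomial is constant. The projective quadric argument handles it uniformly, since the binary quadratic form $\det(sA+tB)$ always has a nontrivial zero $(s,t)\neq(0,0)$ over $\C$. The partial-transpose identity is a routine check.
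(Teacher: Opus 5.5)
Your proof is correct and is essentially the paper's argument: a nontrivial zero of the binary quadratic form $\det(sA+tB)$ gives a product vector in the span of two negative eigenvectors, and $\Gamma(\sigma)$ would be negative on it, contradicting block positivity. One small slip in your second bullet: if $\det(A+zB)$ is a nonzero constant, then its $z^2$-coefficient $\det B$ vanishes, so $B$ itself is the product vector (the leading coefficient of $\det(wA+B)$ is $\det A\neq 0$, not $0$), but your uniform projective argument makes this harmless.
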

\begin{proof}
    Let $\lambda_1\geqslant \lambda_2\geqslant \lambda_3\geqslant \lambda_4$ be the eigenvalues of $\Gamma(\sigma)$ and let $v_1,\ldots, v_4$ be a corresponding basis of eigenvectors. By the Peres--Horodecki criterion we know that $\lambda_4<0$. Now assume $\lambda_3<0$. We have $\C^4=\C^2\otimes \C^2\cong \text{Mat}_2(\C)$, and we denote by $V_3,V_4$ the images of $v_3,v_4$ under this isomorphism. Then the homogeneous quadratic equation $\det(xV_3+yV_4)=0$ has a non-trivial solution $(a,b)$ in $\C^2$, and thus  $v:=av_3+bv_4$ is a product vector. From $$v^*\Gamma(\sigma)v=\vert a\vert^2 \lambda_3+\vert b\vert^2\lambda_4<0$$ we obtain a contradiction, since $\Gamma(\sigma)$ is still block positive, i.e.\ nonnegative as a quadratic form on product vectors.
\end{proof}

Now for  $\sigma \in \mathsf E$ let  $v$ be a normalized eigenvector to the unique negative eigenvalue of $\Gamma(\sigma)$. If $v$ were a product vector,  this would again contradict block positivity of $\Gamma(\sigma).$ We thus have $vv^*\in \mathsf{PE}$ and use this to define the following map:
$$\Xi\colon \mathsf E\to \mathsf{PE};\    \sigma \mapsto vv^*.$$
In the following we will again consider the set $\mathsf M\subseteq\mathsf{P E}$ of maximally entangled pure states. Here it consists of those pure states that have the two identical  Schmidt coefficients $1/\sqrt{2}$ or, equivalently, have maximally mixed reduced density matrices  ($=\frac12 I_2$).

\begin{prop}
\label{Lem2}
    We have:
    \begin{enumerate}   
        \item[($i$)] $\Xi$ is continuous on $\mathsf E.$
        \item[($ii$)] $\Xi(\mathsf {PE})\subseteq \mathsf M$ and $\Xi^2_{|_\mathsf{M}}={\rm id}_{\mathsf M}$.  
        \item[($iii$)]  For all $\sigma \in \mathsf E,$ $t \in [0,1]$ we have $(1-t)\sigma+t \Xi^2(\sigma)\in \mathsf E.$
    \end{enumerate} 
\end{prop}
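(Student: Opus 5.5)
The plan is to treat the three items separately. Each one reduces to a direct computation with the partial transpose, combined with \Cref{unique}.

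For ($i$), I will use that by \Cref{unique} the negative eigenvalue $\lambda_4(\Gamma(\sigma))$ is simple for every $\sigma\in\mathsf E$. Hence $\Xi(\sigma)=vv^*$ is exactly the spectral projection of $\Gamma(\sigma)$ onto this isolated simple eigenvalue; in particular it is independent of the phase of $v$. Since $\sigma\mapsto\Gamma(\sigma)$ is linear and the eigenvalues depend continuously on the matrix, a neighbourhood of $\sigma_0$ keeps a gap between $\lambda_4$ and $\lambda_3$. On that neighbourhood the spectral projection can be written as a Riesz integral $\frac{1}{2\pi i}\oint_\gamma (z-\Gamma(\sigma))^{-1}\,dz$ over a fixed small circle $\gamma$ around $\lambda_4(\Gamma(\sigma_0))$. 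This integral depends continuously on $\sigma$, which gives continuity of $\Xi$.

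For ($ii$), I first record the covariance $\Gamma\big((U\otimes V)\rho(U\otimes V)^*\big)=(U\otimes\bar V)\Gamma(\rho)(U\otimes\bar V)^*$. This lets me bring a pure entangled state $\psi$ into Schmidt form $\psi=a\,e_1\otimes f_1+b\,e_2\otimes f_2$ with $a,b>0$. A direct computation in the basis $e_i\otimes\bar f_j$ gives
\[
\Gamma(\psi\psi^*)=a^2|11\rangle\langle11|+b^2|22\rangle\langle22|+ab\big(|12\rangle\langle21|+|21\rangle\langle12|\big).
\]
Its eigenvalues are $a^2,b^2,ab,-ab$, and the negative eigenvector is $(e_1\otimes\bar f_2-e_2\otimes\bar f_1)/\sqrt2$. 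This vector has equal Schmidt coefficients, so $\Xi(\mathsf{PE})\subseteq\mathsf M$.

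For $\psi\in\mathsf M$ (so $a=b$), I will apply the same recipe a second time. I rewrite $\Xi(\psi)$ in Schmidt form with the second-factor basis $g_1=\bar f_2$, $g_2=-\bar f_1$. The recipe then produces $(e_1\otimes\bar g_2-e_2\otimes\bar g_1)/\sqrt2=-\psi$, so the projector is unchanged and $\Xi^2|_{\mathsf M}={\rm id}$. The only delicate point is this bookkeeping with complex conjugates and signs, and I expect it to be the main (if modest) obstacle.

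For ($iii$), let $w$ be the unit eigenvector defining $\Xi(\sigma)=ww^*$, and consider the linear functional $\ell(\tau)=\tr[\tau\,\Gamma(ww^*)]=\tr[\Gamma(\tau)ww^*]$. Four facts are needed:
\begin{itemize}
\item For separable $\tau$, the operator $\Gamma(\tau)$ is separable and hence positive semidefinite, so $\ell(\tau)\ge0$ on $\mathsf S$.
\item At $\sigma$ itself, $\ell(\sigma)=\lambda_4(\Gamma(\sigma))<0$.
\item Since $ww^*\in\mathsf{PE}\subseteq\mathsf E$, the point $\Xi^2(\sigma)=uu^*$ is defined, with $u$ the negative eigenvector of $\Gamma(ww^*)$. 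Therefore $\ell(uu^*)=u^*\Gamma(ww^*)u=\lambda_4(\Gamma(ww^*))<0$.
\item By linearity, $\ell$ is strictly negative along the whole segment $(1-t)\sigma+t\,\Xi^2(\sigma)$.
\end{itemize}
Each point of the segment lies in the convex set $\mathsf D$, and $\ell$ separates it from $\mathsf S$, so the entire segment lies in $\mathsf E$.
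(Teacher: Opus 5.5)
Your proposal is correct and follows the paper's proof essentially step by step. In ($i$) you use continuity of the spectral projection onto the simple negative eigenvalue of $\Gamma(\sigma)$, via a Riesz integral where the paper uses the Lagrange-interpolation product formula. In ($ii$) you use the explicit Schmidt-form computation of $\Gamma(\psi\psi^*)$, and you carry out the conjugation and sign bookkeeping for $\Xi^2|_{\mathsf M}$ that the paper leaves implicit; it checks out, giving $-\psi$. In ($iii$) you use the functional $\tau\mapsto\langle w,\Gamma(\tau)w\rangle$ together with self-adjointness of $\Gamma$, exactly as the paper does.
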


\begin{proof}
    ($i$) By functional calculus it is known that  the projector onto the eigenspace  of $\sigma$ corresponding to the eigenvalue $\lambda$  is $$\prod_{\mu\in {\rm spec}(\sigma)\setminus\{ \lambda\} }\frac{\sigma-\mu  I}{\lambda-\mu}.$$ 
    As by Weyl's inequality the spectrum of any Hermitian matrix is stable and by \Cref{unique} there is always a spectral gap for the smallest eigenvalue $\lambda$ on $\mathsf E$, continuity of $\Xi$ follows.
    
    ($ii$) Every pure entangled state admits a Schmidt-decomposition with Schmidt coefficients $s_1,s_2>0$ and  $s_1^2+s_2^2=1$. The choice of basis corresponds exactly to the local unitary action $({\rm SU}(2)\otimes {\rm SU}(2))\curvearrowright \mathsf {PE}$.  Thus we can write any such state as $$(U\otimes V) \begin{pmatrix}
    s_1^2 & 0 & 0 & s_1s_2\\
    0 & 0 & 0 & 0\\
    0 & 0 & 0 & 0\\
    s_1s_2 & 0 & 0 & s_2^2
    \end{pmatrix} (U \otimes V)^*$$ with $U,V\in {\rm SU}(2)$. We can now make use of the fact that local unitary action behaves well with regard to $\Gamma$, so application of $\Gamma$ yields: $$(U\otimes \overline V) \begin{pmatrix}
    s_1^2 & 0 & 0 & 0\\
    0 & 0 & s_1s_2 & 0\\
    0 & s_1s_2 & 0 & 0\\
    0 & 0 & 0 & s_2^2
    \end{pmatrix}(U\otimes \overline V)^*.$$ It is easy to see from this that the eigenvector corresponding to the unique negative eigenvalue $-s_1s_2$ is $$(U\otimes \overline V)\begin{psmallmatrix}
    0\\
    1\\
    -1\\
    0
    \end{psmallmatrix},$$ i.e.\ it lies in the orbits of the Bell states and is therefore maximally entangled.

    Now the maximally entangled pure states are exactly those with identical Schmidt coefficients $s_1=s_2=1/\sqrt{2}$.  The second claim then follows easily by the same calculations as above.  
    
    ($iii$) It remains to show that the segment from $\sigma$ to $\Xi^2(\sigma)$ stays entangled. Write $\Xi(\sigma)=vv^*$ and $\Xi^2(\sigma)=ww^*$. Then $w$ is the negative eigenvector of $\Gamma(vv^*)$. Since partial transpose is self-adjoint for the Hilbert-Schmidt pairing,
    $$
        \langle v,\Gamma(ww^*)v\rangle=\tr(vv^*\Gamma(ww^*))=\tr(\Gamma(vv^*)ww^*)=\langle w,\Gamma(vv^*)w\rangle<0.
        $$
    Also $\langle v,\Gamma(\sigma)v\rangle<0$ by definition of $v$. Hence $\langle v,\Gamma((1-t)\sigma+t\Xi^2(\sigma))v\rangle<0$ for all $t$, so the Peres--Horodecki criterion implies entanglement.
\end{proof}

With this preparation we get to the main result of this section:

\begin{theorem}
\label{Thm}
    In the case of two qubits, $\mathsf M$  is a strong deformation retract of $\mathsf E$. In particular, $\mathsf E$ is homotopy equivalent to $\mathsf M$, which is further homeomorphic to ${\rm SO}(3)$ and real projective space $\mathbb R\mathbb P^3$:$$\mathsf E\simeq \mathsf M\cong {\rm SO}(3)\cong \mathbb R\mathbb P^3.$$ In particular, one has $\pi_1(\mathsf E)\cong \Z/2\Z$ and the following integer homology groups  $$H_0(\mathsf E)=\Z,\ H_1(\mathsf E)=\Z/2\Z,\ H_3(\mathsf E)=\Z,$$ while all other homology groups of $\mathsf E$ are trivial.
\end{theorem}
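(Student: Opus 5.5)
The plan is to build an explicit strong deformation retraction of $\mathsf E$ onto $\mathsf M$ from the map $\Xi$ and Proposition~\ref{Lem2}, and then read off everything else from Proposition~\ref{prop:maxent}. Define $r\coloneqq \Xi^2\colon \mathsf E\to\mathsf M$. By Proposition~\ref{Lem2}($i$) it is continuous. By ($ii$), $\Xi(\mathsf E)\subseteq\mathsf{PE}$ and $\Xi(\mathsf{PE})\subseteq\mathsf M$, so $r$ indeed lands in $\mathsf M$. Again by ($ii$), $r|_{\mathsf M}={\rm id}_{\mathsf M}$, because $\Xi^2_{|_\mathsf M}={\rm id}_\mathsf M$. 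Hence $r$ is a retraction onto $\mathsf M$.

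For the homotopy, I would take the straight-line homotopy
$$H(\sigma,t)\coloneqq(1-t)\sigma+t\,\Xi^2(\sigma),\qquad t\in[0,1].$$
By Proposition~\ref{Lem2}($iii$) it stays inside $\mathsf E$. It is jointly continuous since $\Xi^2$ is continuous. We have $H_0={\rm id}_\mathsf E$ and $H_1=r$. For $\sigma\in\mathsf M$ we get $H(\sigma,t)=\sigma$ for all $t$, so $H$ is a strong deformation retraction and $\mathsf E\simeq\mathsf M$.

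The remaining claims then follow directly. Proposition~\ref{prop:maxent} with $n=2$ gives $\mathsf M\cong PU(2)\cong SO(3)\cong\mathbb{RP}^3$ and $\pi_1\cong\Z/2\Z$. Homotopy invariance transfers $\pi_1$ and the integral homology $H_0=\Z$, $H_1=\Z/2\Z$, $H_2=0$, $H_3=\Z$, with all higher groups vanishing since $\mathbb{RP}^3$ is a $3$-dimensional CW complex.

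The main obstacle has already been absorbed into Proposition~\ref{Lem2}: continuity of $\Xi$, which rests on the spectral gap from Lemma~\ref{unique}, and staying inside $\mathsf E$ along the segment. One point needs care: $\Xi(\sigma)=vv^*$ is well defined even though $v$ is determined only up to a phase, so $\Xi$ really maps into projectors. Beyond that, the only other check is that $r$ maps all of $\mathsf E$ into $\mathsf M$ and not merely $\mathsf{PE}$ into $\mathsf M$. This is exactly the composition of the two inclusions above, so no further estimate should be needed.
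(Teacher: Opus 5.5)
Your proposal is correct and matches the paper's proof essentially verbatim: you use the retraction $\Xi^2$ and the straight-line homotopy, which Proposition~\ref{Lem2}($iii$) keeps inside $\mathsf E$, and you transfer $\pi_1$ and the homology from Proposition~\ref{prop:maxent}. One small cosmetic point: the inclusion $\Xi(\mathsf E)\subseteq\mathsf{PE}$ comes from the definition of $\Xi$ itself rather than from part ($ii$).
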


\begin{proof}
    By \Cref{Lem2}, the map $\Xi^2\colon \mathsf E\to\mathsf M$ is continuous and the identity when restricted to $\mathsf M$. Furthermore, every  convex combination of $\sigma$ and $\Xi^2(\sigma)$ is entangled. Therefore $$H\colon\mathsf E\times [0,1] \to \mathsf E,\ (\sigma,t) \mapsto(1-t)\sigma + t\Xi^2(\sigma)$$ is a strong deformation retraction from $\mathsf E$ to $\mathsf M$.  
    So the homotopy and homology groups are those  stated in \Cref{prop:maxent}.
\end{proof}

\begin{remark}
    Another potential candidate, instead of $\Xi$, for inducing a homotopy equivalence between $\mathsf E$ and $\mathsf{PE}\simeq\mathsf M$ is the Lewenstein--Sanpera decomposition \cite{lsd}. In the two-qubit case, it was shown that for every state $\sigma$ there exists a unique state $L(\sigma)\in\mathsf{PE}$ such that
    \[
        \frac{1}{1-\lambda}\bigl(\sigma-\lambda L(\sigma)\bigr)\in\mathsf S,
    \]
    where $\lambda$ is chosen to be minimal. It is straightforward to see that $L$ is piecewise continuous when $\mathsf E$ is stratified according to rank, and it also seems plausible that $L$ is continuous on all of $\mathsf E$. However, upon closer inspection, a proof of this continuity turned out to be more elusive than we had anticipated, and we are currently unable to determine whether this approach can indeed be made to work.
\end{remark}

As it turns out, the topology of $\mathsf E$ changes substantially when passing to its closure. To prove this, we need some further preparation.

\begin{prop}
\label{Ner}
    For two qubits the equality $\mathsf S\cap \partial\Gamma(\mathsf D)= \partial_* \mathsf E$ holds.
\end{prop}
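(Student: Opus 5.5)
Throughout I use the Peres--Horodecki criterion in the form $\mathsf S=\mathsf D\cap\Gamma(\mathsf D)$. Here $\Gamma(\mathsf D)=\{\tau:\tr\tau=1,\ \Gamma(\tau)\geqslant0\}$ is the image of $\mathsf D$ under the trace-preserving linear involution $\Gamma$. It is therefore a convex body in the same affine space, with $I/4$ in its interior. Its boundary is $\partial\Gamma(\mathsf D)=\Gamma(\partial\mathsf D)=\{\tau:\Gamma(\tau)\geqslant0,\ \Gamma(\tau)\text{ singular}\}$. With this, the proposition says: a separable $\sigma$ is a limit of entangled states if and only if $\Gamma(\sigma)$ is singular. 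I prove the two inclusions separately.

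\emph{Inclusion $\partial_*\mathsf E\subseteq\mathsf S\cap\partial\Gamma(\mathsf D)$.} Let $\sigma\in\partial_*\mathsf E=\overline{\mathsf E}\cap\mathsf S$. Then $\sigma=\lim\sigma_k$ with $\sigma_k\in\mathsf E$. Since $\sigma_k\in\mathsf D\setminus\mathsf S$, the criterion gives $\sigma_k\notin\Gamma(\mathsf D)$. On the other hand $\sigma\in\mathsf S\subseteq\Gamma(\mathsf D)$. A point of a closed set that is a limit of points of its complement lies on its boundary, so $\sigma\in\partial\Gamma(\mathsf D)$.

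\emph{Inclusion $\mathsf S\cap\partial\Gamma(\mathsf D)\subseteq\partial_*\mathsf E$.} Let $\sigma\in\mathsf S$ with $\Gamma(\sigma)w=0$ for some unit vector $w$. I have to produce entangled states converging to $\sigma$, and I will do this by convex perturbations $\sigma_\varepsilon=(1-\varepsilon)\sigma+\varepsilon\tau$ with $\tau\in\mathsf D$, which stay in $\mathsf D$ automatically. Each such $\sigma_\varepsilon$ is entangled as soon as some vector $u$ satisfies $u^*\Gamma(\sigma_\varepsilon)u<0$.

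If $w$ is entangled, the Schmidt computation in the proof of \Cref{Lem2}($ii$) shows that $\Gamma(ww^*)$ has a negative eigenvalue $-s_1s_2$ with eigenvector $\eta$. Take $\tau=\eta\eta^*$ and $u=w$. Self-adjointness of $\Gamma$ gives $w^*\Gamma(\tau)w=\tr[\tau\,\Gamma(ww^*)]=-s_1s_2<0$. Hence $w^*\Gamma(\sigma_\varepsilon)w=-\varepsilon s_1s_2<0$ for every $\varepsilon\in(0,1]$, and $\sigma_\varepsilon\to\sigma$.

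The main obstacle is the case where $w=a\otimes b$ is a product vector, since then $\Gamma(ww^*)\geqslant0$ and the first-order argument gives nothing. Here I would tilt $w$ and use a second-order estimate. Set $w_\delta=a\otimes b+\delta\, a^\perp\otimes b^\perp$, which is entangled for $\delta\ne0$ with smaller Schmidt coefficient of order $|\delta|$. Because $\Gamma(\sigma)w=0$, the cross terms vanish and $w_\delta^*\Gamma(\sigma)w_\delta=\delta^2 q$ with $q\geq0$ independent of $\delta$. As in the previous case, $\Gamma(w_\delta w_\delta^*)$ has a negative eigenvalue of size at least $c|\delta|$ for some $c>0$ (its modulus is the product of the unnormalised Schmidt coefficients $1$ and $|\delta|$). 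Let $\tau_\delta$ be the corresponding eigenprojector. Then
$$w_\delta^*\Gamma\big((1-\varepsilon)\sigma+\varepsilon\tau_\delta\big)w_\delta\le(1-\varepsilon)\delta^2q-\varepsilon c|\delta|.$$
Choosing $\varepsilon=\sqrt{|\delta|}$ makes the right-hand side negative for small $\delta$, while $\sigma_\varepsilon\to\sigma$ as $\delta\to0$. This yields entangled states accumulating at $\sigma$, so $\sigma\in\overline{\mathsf E}\cap\mathsf S=\partial_*\mathsf E$.

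In fact this tilting argument also covers entangled $w$, so the two subcases could be merged. I would still keep the direct computation for entangled $w$ for clarity. The only points that need checking are the order-$|\delta|$ lower bound on the negative eigenvalue of $\Gamma(w_\delta w_\delta^*)$ and the vanishing of the cross terms, both of which are routine.
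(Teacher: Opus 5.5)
Your proof is correct. It follows a different and simpler route from the paper's in the one case that matters.

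The first inclusion and your entangled-$w$ case agree with the paper. Its witness $\rho\in\Gamma(\mathsf D)$ with $v^*\rho v<0$ is your $\Gamma(\eta\eta^*)$.

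The routes diverge for a product kernel vector. The paper first mixes in $\Gamma(vv^*)$ to cut $\ker\Gamma(\sigma)$ down to dimension one. It then imports from the literature that such a $\sigma$ has ${\rm rank}\,\sigma\geq 3$, so that $\ker\sigma$ is exactly the product line. That strict positivity is what keeps its non-convex traceless perturbation $\sigma+\varepsilon\phi$ inside $\mathsf D$. Entanglement is then detected by $x=\lambda(v\otimes\bar w)+a\otimes\bar b$ with $\lambda\to-\infty$.

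Your convex perturbation toward a pure state stays in $\mathsf D$ for free, and you use only $\Gamma(\sigma)w=0$. Along $w_\delta$ the $\Gamma(\sigma)$-contribution is quadratic in $\delta$, while the $\Gamma(\tau_\delta)$-contribution is linear and negative. This removes both the kernel-dimension reduction and the external rank bound.

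The argument can be streamlined further. For $\delta>0$ the negative eigenvector of $\Gamma(w_\delta w_\delta^*)$ is $\eta=(a\otimes\overline{b^\perp}-a^\perp\otimes\bar b)/\sqrt2$, which does not depend on $\delta$. The compression of $\Gamma\big((1-\varepsilon)\sigma+\varepsilon\eta\eta^*\big)$ to $\mathrm{span}\{a\otimes b,\,a^\perp\otimes b^\perp\}$ has diagonal entry $0$ at $a\otimes b$ and off-diagonal entry $-\varepsilon/2$. It is therefore indefinite for every $\varepsilon\in(0,1]$, so the coupling $\varepsilon=\sqrt{|\delta|}$ is not needed.

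Finally, your argument for $\mathsf S\cap\partial\Gamma(\mathsf D)\subseteq\partial_*\mathsf E$ uses only the easy direction of the Peres--Horodecki criterion. It carries over essentially unchanged to any $n_1\times n_2$, working in the span of two Schmidt vectors. Sufficiency of the criterion, the genuinely two-qubit input, is needed only for the reverse inclusion. The paper's route, by contrast, also depends on two-qubit-specific facts for this inclusion: the rank bound, and product vectors in two-dimensional kernels.
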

\begin{proof}
     The inclusion $\partial_*\mathsf E\subseteq \mathsf S\cap\partial\Gamma(\mathsf D)$ is clear  from the Peres-Horodecki criterion.
     For the other inclusion we have to show that for $\sigma\in\mathsf S\cap \partial\Gamma(\mathsf D)$ we always have  $\sigma\in\overline{\mathsf E}.$  

     If the kernel of $\Gamma(\sigma)$ is of dimension at least $2$, it contains a non-zero product vector $v$, as seen in the proof of \Cref{unique}. We can then replace $\sigma$ by $(1-\varepsilon)\sigma+\varepsilon \Gamma(vv^*)$ for arbitrary $\varepsilon>0,$ and have reduced to the case where the kernel of $\Gamma(\sigma)$ is of dimension $1$. 
     
     If the kernel of $\Gamma(\sigma)$ contains an entangled pure state $v$, we can find an entanglement witness for $v$, i.e.\ some $\rho\in\Gamma(\mathsf D)$ such that $v^*\rho v<0$. Then $(1-\varepsilon)\sigma+\varepsilon\Gamma(\rho)$ is an entangled state arbitrarily close to $\sigma$. 
   
    So the only remaining case is where the kernel of $\Gamma(\sigma)$ is spanned by a single product vector. By \cite{stv} we know that ${\rm rank}(\sigma)\geqslant 3$ must hold in this case. Now by assumption there are unit vectors $v,w\in \C^2$  such that $v\otimes\overline w$ spans the kernel of $\Gamma(\sigma)$. Then $$(v\otimes w)^*\sigma(v\otimes w)=(v\otimes \bar w)^*\Gamma(\sigma)(v\otimes \bar w)=0$$
    and so the kernel of $\sigma$ is spanned by $v\otimes w$.
    Choose unit vectors $a,b\in \C^2$ with $a^*v=0$ and $b^*w=0,$ and consider the traceless Hermitian matrix $$\phi=(v\otimes b)(a \otimes w)^*+(a \otimes w)(v\otimes b)^*,$$ for which we clearly have $v\otimes w\in {\rm kern}(\phi)={\rm range}(\phi)^{\perp}$. Since $\sigma$ is a strictly positive operator on the orthogonal complement of $v \otimes w$, we get $$\sigma_\varepsilon:=\sigma+\varepsilon \phi \geqslant 0$$ for $\varepsilon>0$ small enough.
    We have $$\Gamma (\sigma_\varepsilon)=\Gamma(\sigma)+\varepsilon(v\otimes \bar w)(a\otimes \bar b)^*+\varepsilon(a\otimes \bar b)(v \otimes \bar w)^*.$$
    Now define $x=\lambda(v\otimes \overline w)+(a\otimes \overline b)\in\C^2\otimes\C^2$ with $\lambda\in\RR$ and compute  \begin{align*}x^*\Gamma(\sigma_\varepsilon)x&=(a\otimes b)^*\sigma (a \otimes b)+2\varepsilon\lambda.\end{align*}
    For large enough negative $\lambda$, this clearly becomes negative, which shows that $\sigma_\varepsilon$ is entangled. This finishes the proof.
\end{proof}

\begin{theorem}\label{thm:closureE}
    For two qubits, $\overline{\mathsf E}$ and $\mathsf E$ are  dual in the sense that for all $i$:
    \begin{equation}\label{eq:closure-duality}
        \widetilde H^{\,13-i}(\overline{\mathsf E};\Z)\;\cong\;\widetilde H_i(\mathsf E;\Z).
    \end{equation}
    Moreover,
    $$
        H_0(\overline{\mathsf E})=\Z,\qquad
        H_{10}(\overline{\mathsf E})=\Z,\qquad
        H_{11}(\overline{\mathsf E})=\Z/2\Z,
    $$
    and all other homology groups of $\overline{\mathsf E}$ vanish.
\end{theorem}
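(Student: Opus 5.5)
The plan is to realize $\overline{\mathsf E}$, up to homotopy, as the complement in a $13$-sphere of a compact set homotopy equivalent to $\mathsf E$, and then apply Alexander duality (\Cref{lem:alexander}).

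\textbf{Step 1: pass to the inner boundary.} By \Cref{prop:Bor}, $\overline{\mathsf E}\simeq\partial_*\mathsf E$. By \Cref{Ner}, $\partial_*\mathsf E=\mathsf S\cap\partial\Gamma(\mathsf D)$. Since $\Gamma$ is an affine isometry of the trace-one affine space fixing $I/4$, $\Gamma(\mathsf D)$ is a compact convex body with $I/4$ in its interior. Hence radial projection from $I/4$ gives $\partial\Gamma(\mathsf D)\cong\mathbb S^{14}$.

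\textbf{Step 2: identify the complementary piece and show it is homotopy equivalent to $\mathsf E$.} Write
\[
\partial\Gamma(\mathsf D)=\bigl(\mathsf S\cap\partial\Gamma(\mathsf D)\bigr)\;\dot\cup\;\bigl(\partial\Gamma(\mathsf D)\setminus\mathsf D\bigr).
\]
This uses the Peres--Horodecki identity $\mathsf S=\mathsf D\cap\Gamma(\mathsf D)$. The complement $\partial\Gamma(\mathsf D)\setminus\mathsf D$ equals $\Gamma(\partial\mathsf D\setminus\Gamma(\mathsf D))=\Gamma(\partial\mathsf D\setminus\mathsf S)=\Gamma(\partial^*\mathsf E)$, since $\Gamma$ is an involution. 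It is therefore homeomorphic to $\partial^*\mathsf E$, which by \Cref{prop:Bor} is homotopy equivalent to $\mathsf E\simeq\mathbb{RP}^3$.

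\textbf{Step 3: apply Alexander duality.} Take $\Sigma=\partial\Gamma(\mathsf D)\cong\mathbb S^{14}$ and the compact, semialgebraic (hence locally contractible), nonempty, proper set $A=\partial_*\mathsf E$. \Cref{lem:alexander} gives
\[
\widetilde H_i(\Gamma(\partial^*\mathsf E);\Z)\cong\widetilde H^{13-i}(\partial_*\mathsf E;\Z),
\]
which is \eqref{eq:closure-duality} after the homotopy equivalences above. The main obstacle is bookkeeping: checking that $A$ is proper and nonempty, and that the decomposition is disjoint and exhaustive. Nonemptiness of $A$ follows because $\overline{\mathsf E}$ meets $\mathsf S$, the two being closed sets in the connected space $\mathsf D$. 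Properness follows because some state outside $\mathsf D$ lies on $\partial\Gamma(\mathsf D)$, for example $\Gamma$ of a Bell state.

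\textbf{Step 4: compute the homology of $\overline{\mathsf E}$.} Insert $\widetilde H_*(\mathbb{RP}^3;\Z)$, which is $\Z/2\Z$ in degree $1$, $\Z$ in degree $3$, and $0$ otherwise. This gives reduced cohomology $\widetilde H^{12}(\overline{\mathsf E})\cong\Z/2\Z$, $\widetilde H^{10}(\overline{\mathsf E})\cong\Z$, and zero in all other degrees. Since $\overline{\mathsf E}$ is homotopy equivalent to a compact semialgebraic set, it has finitely generated homology, so the universal coefficient theorem converts this to homology. The torsion in $H^{12}$ comes from torsion in $H_{11}$, and the free part of $H^{10}$ gives $H_{10}=\Z$. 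The remaining groups vanish: $H_{12}$ and the other groups are forced to be zero because they would produce free or torsion classes in the wrong cohomological degrees. Together with $H_0=\Z$ from \Cref{thm:path}, this yields the stated list.
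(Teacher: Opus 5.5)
Your proof is correct and is essentially the paper's argument, mirrored through $\Gamma$. The paper applies Alexander duality inside $\partial\mathsf D$, after using $\Gamma$ to carry $\partial_*\mathsf E=\mathsf S\cap\partial\Gamma(\mathsf D)$ onto $\mathsf S\cap\partial\mathsf D$ (whose complement is $\partial^*\mathsf E$); you apply it inside $\partial\Gamma(\mathsf D)\cong\mathbb S^{14}$ and use $\Gamma$ to identify the complement $\partial\Gamma(\mathsf D)\setminus\mathsf D$ with $\partial^*\mathsf E$, and your universal-coefficient step matches the paper's.
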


\begin{proof}
    By \Cref{prop:Bor}, $\partial^*\mathsf E$ is a strong deformation retract of $\mathsf E$
    and $\partial_*\mathsf E$ of $\overline{\mathsf E}$, so
    $\widetilde H_i(\mathsf E)\cong\widetilde H_i(\partial^*\mathsf E)$ and
    $\widetilde H^{\,13-i}(\overline{\mathsf E})\cong\widetilde H^{\,13-i}(\partial_*\mathsf E)$
    for all $i$.

    The partial transpose $\Gamma$ is a linear involution of the ambient space $\RR^{15}$,
    hence a homeomorphism, and it satisfies $\Gamma(\mathsf S)=\mathsf S$ and
    $\Gamma\big(\partial\Gamma(\mathsf D)\big)=\partial\mathsf D$. By \Cref{Ner} it therefore
    restricts to a homeomorphism
    $$
        \partial_*\mathsf E=\mathsf S\cap\partial\Gamma(\mathsf D)
        \ \xrightarrow{\ \Gamma\ }\
        \mathsf S\cap\partial\mathsf D=:A .
    $$
    The set $A$ is nonempty, proper, compact and, being semialgebraic, locally contractible,
    and its complement in $\partial\mathsf D$ is $\partial^*\mathsf E$. Since $\mathsf D$ is a
    full-dimensional compact convex body in $\RR^{15}$, we have $\partial\mathsf D\cong\mathbb S^{14}$,
    and Alexander duality (\Cref{lem:alexander}) gives
    $$
        \widetilde H_i(\partial^*\mathsf E)\;\cong\;\widetilde H^{\,13-i}(A)
        \;\cong\;\widetilde H^{\,13-i}(\partial_*\mathsf E).
    $$
    Combining the three yields \eqref{eq:closure-duality}.

    By \Cref{Thm}, $\mathsf E\simeq\mathbb{RP}^3$, so $\widetilde H_i(\mathsf E)$ equals $\Z/2\Z$
    for $i=1$, $\Z$ for $i=3$, and $0$ otherwise. Hence, by \eqref{eq:closure-duality}, the 
    nontrivial reduced cohomology of $\overline{\mathsf E}$ is
    $$
        \widetilde H^{10}(\overline{\mathsf E})=\Z,\qquad
        \widetilde H^{12}(\overline{\mathsf E})=\Z/2\Z.
    $$
    Finally, $\overline{\mathsf E}$ is compact semialgebraic and thus homotopy equivalent to a
    finite simplicial complex \cite{real}*{Theorem 9.2.1}, so $H_*(\overline{\mathsf E};\Z)$ is finitely
    generated. The universal coefficient theorem \cite{hatcher}*{Corollary 3.3} then gives
    $\operatorname{rank}H_m=\operatorname{rank}H^m$ and
    $\operatorname{tor}H_m\cong\operatorname{tor}H^{\,m+1}$, which turns the cohomology above into
    the stated homology groups.
\end{proof}

\begin{remark}
    Comparing with \Cref{Thm}, passing to the closure moves the entire reduced topology of
    $\mathsf E$ from degrees $1$ and $3$ to degrees $11$ and $10$:
    $$
    \begin{array}{c|ccccc}
        k & 0 & 1 & 3 & 10 & 11\\\hline
        H_k(\mathsf E) & \Z & \Z/2\Z & \Z & 0 & 0\\[2pt]
        H_k(\overline{\mathsf E}) & \Z & 0 & 0 & \Z & \Z/2\Z
    \end{array}
    $$
\end{remark}

\begin{remark}
    $\partial_* \mathsf E$ and $\Gamma(\partial_* \mathsf E)$ are closed semialgebraic sets and by \Cref{Ner} we have $$\partial_*\mathsf  E\cup \Gamma(\partial_*\mathsf E)=\partial \mathsf S\cong \mathbb{S}^{14}.$$ Using the semialgebraic version of Mayer--Vietoris one gets: 
    $$H_i(\partial \mathsf D \cap \partial \Gamma(\mathsf D))\cong
    \begin{cases}
    \Z & \text{if } i=13\\
    \Z/2\Z \oplus \Z/ 2\Z &\text{if } i=11\\
    \Z \oplus \Z  & \text{if } i = 10\\
    \Z &\text{if } i=0\\ 0 & \text{else.}
    \end{cases}$$ 
\end{remark}

Finally, it turns out that the interior
has identical homology to $\mathsf E$:

\begin{theorem}
    $H_i(\mathsf E^\circ)\cong H_i(\mathsf E)$ for all $i$.
\end{theorem}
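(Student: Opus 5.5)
Our plan is to show that the inclusion $\mathsf E^\circ\hookrightarrow\mathsf E$ is a homotopy equivalence, by constructing a deformation of $\mathsf E$ into $\mathsf E^\circ$ that moves points only slightly, in the spirit of the last step of the proof of \Cref{thm:path}.

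First we identify $\mathsf E^\circ$. Since $\mathsf S$ is closed, $\mathsf E$ is relatively open in $\mathsf D$. Hence $\mathsf E^\circ=\mathsf E\cap\mathsf D^\circ$, the set of entangled positive definite states. The natural candidate deformation is the mixing with the identity used before:
$$F_s(\sigma)\coloneqq(1-s\,\varepsilon(\sigma))\,\sigma+s\,\varepsilon(\sigma)\,I/N,\qquad s\in[0,1].$$
Here $\varepsilon\colon\mathsf E\to(0,1)$ is a continuous function, chosen small enough that the whole segment from $\sigma$ to $F_1(\sigma)$ remains in $\mathsf E$.

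The second step is the choice of $\varepsilon$. For this we use the witness map $R$ and the projection $\Theta$ from \Cref{lem:projection}. For $\sigma\in\mathsf E$ set $\delta(\sigma)\coloneqq\|\sigma-\Theta(\sigma)\|_2>0$; this is continuous. By the variational inequality, a state $\tau$ satisfies $\tr[R(\sigma)\tau]>h_{\mathsf S}(R(\sigma))$, and is therefore entangled, whenever $\|\tau-\sigma\|_2<\delta(\sigma)$. Indeed, the functional $\tr[R(\sigma)\,\cdot\,]$ has Lipschitz constant $1$, and $\tr[R(\sigma)\sigma]-h_{\mathsf S}(R(\sigma))\ge\delta(\sigma)$. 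Since $\|\sigma-I/N\|_2\le 2$, we may take $\varepsilon(\sigma)\coloneqq\min\{1/2,\delta(\sigma)/4\}$. Then $F_s(\sigma)\in\mathsf E$ for all $s$. Moreover $F_s(\sigma)$ is positive definite for $s>0$, so $F_1$ maps $\mathsf E$ into $\mathsf E^\circ$, and each $F_s$ with $s>0$ maps $\mathsf E^\circ$ into itself.

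Third, we check the homotopies. Let $\iota\colon\mathsf E^\circ\hookrightarrow\mathsf E$ be the inclusion. The family $F_s$ is a homotopy in $\mathsf E$ from $\mathrm{id}_{\mathsf E}$ to $\iota\circ F_1$. Restricted to $\mathsf E^\circ$, it stays in $\mathsf E^\circ$ for all $s\in[0,1]$: at $s=0$ it is the identity on a point of $\mathsf E^\circ$, and for $s>0$ we argued above. So it is a homotopy from $\mathrm{id}_{\mathsf E^\circ}$ to $F_1\circ\iota$. Thus $\iota$ is a homotopy equivalence with inverse $F_1$, and $H_i(\mathsf E^\circ)\cong H_i(\mathsf E)$ for all $i$ (even $\mathsf E^\circ\simeq\mathsf E\simeq\mathbb{RP}^3$). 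This argument does not use the two-qubit structure.

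The main obstacle is ensuring continuity and uniformity of $\varepsilon$ without a compactness argument, since $\mathsf E$ is not compact and $\delta\to0$ near $\partial_*\mathsf E$. The metric projection handles this, because $\Theta$ is $1$-Lipschitz. If that step caused trouble, a fallback would be a direct homological argument: any singular chain in $\mathsf E$ has compact support, so it can be pushed into $\mathsf E^\circ$ with a fixed small $\varepsilon$, as in \Cref{thm:path}. That yields surjectivity and injectivity of $H_i(\mathsf E^\circ)\to H_i(\mathsf E)$ directly.
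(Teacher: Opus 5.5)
Your proof is correct, and it takes a genuinely different and more elementary route than the paper.

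\textbf{The paper's route.} The paper stays in the two-qubit setting and compactifies $\RR^{15}$ to $\mathbb S^{15}$. It uses the Peres--Horodecki criterion to write $\mathsf E^\circ=\mathsf D^\circ\cap\Gamma(\mathsf D)^c$. Via $\Gamma$ and Proposition~\ref{Ner}, it identifies $\mathsf D^\circ\cup\Gamma(\mathsf D)^c$ with $\overline{\mathsf E}^c$. Mayer--Vietoris plus Alexander duality then give $\widetilde H_i(\mathsf E^\circ)\cong\widetilde H^{13-i}(\overline{\mathsf E})$. The statement follows by comparing with the groups computed in Theorems~\ref{thm:closureE} and~\ref{Thm}. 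So the paper's isomorphism is abstract, obtained degree by degree by computing both sides, with $H_0$ treated separately.

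\textbf{Your route.} Your collar-type deformation $F_s$ shows more: the inclusion $\mathsf E^\circ\hookrightarrow\mathsf E$ is itself a homotopy equivalence, for all $n_1,n_2$. It needs no PPT criterion, no Proposition~\ref{Ner} and no duality. This is what one should expect, since $\mathsf E$ is an open subset of the manifold with boundary $\mathsf D$ and $\mathsf E^\circ$ is its manifold interior.

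\textbf{Comparison.} Your approach buys naturality, a stronger conclusion and full generality. The paper's computation additionally yields the interior--closure duality $\widetilde H_i(\mathsf E^\circ)\cong\widetilde H^{13-i}(\overline{\mathsf E})$ as a by-product.

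\textbf{A simplification.} Your second step can be shortened. The quantity $\delta(\sigma)=\|\sigma-\Theta(\sigma)\|_2$ is just $\operatorname{dist}(\sigma,\mathsf S)$. So the open $\delta(\sigma)$-ball around $\sigma$ misses $\mathsf S$ by definition, and $\delta$ is continuous because distance functions are $1$-Lipschitz. Neither $R$, nor the variational inequality, nor the Lipschitz property of $\Theta$ is needed, so the ``main obstacle'' and your fallback are moot. Also, $\|\sigma-I/N\|_2^2=\tr[\sigma^2]-1/N<1$, so your bound $2$ is generous.
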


\begin{proof}
    We embed $\mathbb{R}^{15}$ into its one point compactification $\mathbb S^{15}$.     For the open sets $A=\mathsf D^\circ$ and $B=\Gamma(\mathsf D)^c$ we have  $$A\cup B\cong \Gamma(\mathsf D^\circ)\cup \mathsf D^c=(\Gamma(\mathsf D^\circ)^c \cap \mathsf D)^c=((\partial \Gamma(\mathsf D)\cap \mathsf D) \cup (\Gamma(\mathsf D)^c\cap \mathsf D))^c= \overline{\mathsf E}^c \ \mbox{ and }\  A \cap B =\mathsf E^\circ,$$ where for the last equality of the first calculation we again used \Cref{Ner} . We thus get the following reduced Mayer--Vietoris sequence: 
    $$\cdots \to \widetilde H_{i+1}(\overline{\mathsf E}^c)\to \widetilde H_i(\mathsf E^\circ)\to \widetilde H_i(\mathsf D^\circ)\oplus \widetilde H_i(\Gamma(\mathsf D)^c)\to \widetilde H_i(\overline{\mathsf E}^c)\to\cdots $$
    $\widetilde H_i(\mathsf D^\circ)$ and $\widetilde H_i(\Gamma (\mathsf D)^c)$ vanish in all dimensions (since we are working in $\mathbb S^{15}$), so again using Alexander duality (this time for one dimension higher), we get $$\widetilde{H}_i(\mathsf E^\circ) \cong \widetilde{H}_{i+1}(\overline{\mathsf E}^c) \cong \widetilde{H}^{13-i}(\overline{\mathsf E}).$$ With the above computed homology groups, this implies the result for all $i\geqslant 1.$ The result $H_0(\mathsf E^\circ)=\mathbb Z$ even holds in all dimensions $n_1,n_2$, as we have shown in \Cref{thm:path}.
\end{proof}

\end{document}